\newtheorem{theorem}{Theorem}
\newtheorem{lem}{Lemma}[theorem]
\newtheorem{corollary}{Corollary}[lem]
\newcommand\tc[1]{\textcolor{blue}{#1}}
\newcolumntype{C}{>{\centering\arraybackslash}X} 
\def\BibTeX{{\rm B\kern-.05em{\sc i\kern-.025em b}\kern-.08em
		T\kern-.1667em\lower.7ex\hbox{E}\kern-.125emX}}
\begin{document}
	
	\title{Design and Implementation of Energy-Efficient Wireless Tire Sensing System with Delay Analysis for Intelligent Vehicles}

	\author{Shashank Mishra, and Jia-Ming Liang, \IEEEmembership{Member, IEEE}
		\thanks{This research is co-sponsored by NSTC 109-2221-E-024-012-MY3 and NSTC 109-2221-E-024-013-MY3. This paper was presented in part at the 25th International Computer Symposium, ICS 2022 \cite{mishra2023_1} in Taoyuan, Taiwan and was published in the corresponding proceedings. 
			\textit{(Corresponding authors: Jia-Ming Liang)} }
		\thanks{S. Mishra and J.-M. Liang are with the Department of Electrical Engineering, National University of Tainan, Tainan, 70005 Taiwan (e-mail: d10982003@stumail.nutn.edu.tw; jmliang@mail.nutn.edu.tw).}}
	
	\markboth{}%
	{S. Mishra \MakeLowercase{\textit{et al.}}: Design and Implementation of Energy-Efficient Wireless Tire Sensing System with Delay Analysis for Intelligent Vehicles}
	
	
	\maketitle
	
	\begin{abstract}
		The growing prevalence of Internet of Things (IoT) technologies has led to a rise in the popularity of intelligent vehicles that incorporate a range of sensors to monitor various aspects, such as driving speed, fuel usage, distance proximity and tire anomalies. Nowadays, real-time tire sensing systems play important roles for intelligent vehicles in increasing mileage, reducing fuel consumption, improving driving safety, and reducing the potential for traffic accidents. However, the current tire sensing system drains a significant vehicle' energy and lacks effective collection of sensing data, which may not guarantee the immediacy of driving safety. Thus, this paper designs an energy-efficient wireless tire sensing system (WTSS), which leverages energy-saving techniques to significantly reduce power consumption while ensuring data retrieval delays during real-time monitoring. Additionally, we mathematically analyze the worst-case transmission delay of the system to ensure the immediacy based on the collision probabilities of sensor transmissions. This system has been implemented and verified by the simulation and field trial experiments. These results show that the proposed scheme provides enhanced performance in energy efficiency 
		and accurately identifies the worst transmission delay. 
		
	\end{abstract}
	
	\begin{IEEEkeywords}
		Delay analysis, Internet-of-Things (IoT), energy saving, intelligent vehicles, wireless tire sensing system (WTSS).
	\end{IEEEkeywords}
	
	\section{Introduction}
	\IEEEPARstart{T}{he} demand for Internet of Things (IoT)-enabled real-time monitoring devices is on the rise in the field of automotive industry, where safety is of the utmost importance \cite{c1,c2}.
	Currently, the regulations of most countries such as the United States, the European Union, and Korea \cite{USA, EU, KR} have mandated that vehicles equipped with tire pressure monitoring systems be roadworthy, especially for passenger vehicles and/or larger vehicles such as trucks, flatbeds, and trailers, since accidents involving these vehicles often result in significant and severe consequences \cite{RRRR1, EE}.
	Numerous wired and wireless technologies employing radio frequency (RF) receivers have been introduced to address the issue of real-time tire status monitoring \cite{c6,c8}.
	The real-time data reception is crucial for vehicle safety. Rapid tire pressure drops, such as those from hitting a pothole or due to pressure leakage, require an immediate response to warn the driver and potentially activate safety features \cite{RR1}. However, the effectiveness of RF receivers is limited by the coverage range of their antennas and the need for close proximity to mitigate sensor transmission delay \cite{c9}. Additionally, traditional systems continuously detect sensing data, leading to significant energy consumption in vehicles, especially for larger vehicles such as trucks, flatbeds, and trailers. These larger vehicles are often detached from their tractors and parked separately in parking lots, making it difficult to sustain power supply from the tractor’s battery \cite{I_1, I_2, I_3}. Therefore, it is necessary to identify scenarios of sensor transmission delay while considering energy efficiency to ensure immediacy for driving safety.
	
	To address the above issue, we propose an energy-efficient IoT-enabled wireless tire sensing system (WTSS) that incorporates wireless sensors and multiple RF transmission modules to enable real-time monitoring. We develop and demonstrate this system based on wireless pressure and temperature sensors, which are widely used in intelligent vehicles \cite{c10}. These sensors are low-end and typically comprise a radio-frequency (RF) transmitter module, sensor cells, and a unique sensor ID, which make it possible for the receiver end to decode the pressure and temperature in real time \cite{c12}.
	In addition, an energy-saving mechanism is designed at the system receiver to conserve energy. This mechanism enforces the receiver to perform wake-up and sleep operations to maintain system reliability and immediacy. Specifically, an analytic scheme is developed to identify the wake-up and sleep schedule and also analyze the sensing delays to achieve real-time identification and avoid potential consequences. The extended simulations and experiments of the field trial will show that the proposed system can save
	energy 
	while ensuring the worst transmission delay.
	
	The major contributions of this paper are fourfold.
	\begin{itemize}
		
		\item{First, to the best of our knowledge, this is the first paper to address the issue of energy and sensing delay in the intelligent tire sensing system. We point out the necessity of energy conservation issues for intelligent vehicle systems and propose an innovative, comprehensive approach encompassing an energy-saving mechanism with sleep behavior analysis, sensor transmission delay evaluation, and timely data scheduling.}
		
		\item{Second, we propose a high-accuracy \emph{analytic scheme} and prove its properties by 2 \emph{Theorems}, 4 \emph{Lemmas}, and 2 \emph{Corollaries} to precisely determine the worst transmission delay accounting for sensor collision probability, sensor reception rate, and power saving ratio for multiple sensors transmitting simultaneously. This novel scheme effectively enhances system reliability for real-time sensor data transmission while ensuring energy efficiency.}
		
		\item{Third, we have designed and implemented this prototype system and conducted extensive field trial experiments in a large parking lot to validate the system's effectiveness and feasibility. Comprehensive real-time experiments were also conducted with a variety of scenarios. Their results demonstrate the prototype's feasibility and reliability and also reveal that the proposed scheme can accurately predict and ensure sensor transmission delays while significantly saving energy. This approach is equally essential for other real-time systems where timely data is crucial for safety and optimal performance.}
		
		\item{Fourth, the proposed architecture can also be applied to other wireless sensor-based information gathering in intelligent vehicles, such as PIR sensors, wheel inertial sensors, ultrasonic sensors, and/or radar sensors \cite{MC4_1, MC4_2, MC4_3, MC4_4}. These can all benefit from the proposed methodologies.}
	\end{itemize}
	
	The rest of this paper is structured as follows. Section II is related work. Section III introduces the system overview, including system architecture and problem identification. Section IV describes the proposed scheme and proofs. Section V shows the performance evaluation. Section VI presents the system demonstration, and Section VII draws the final conclusions and outlines future work.
	
	\section{Related work}
	Significant efforts have been made in previous years to develop tire status monitoring systems for improving safety and preventing accidents \cite{c13,c14}.
	To address sudden changes in tire status, different technologies and concepts have been proposed \cite{c15,c16}. Whereas several studies have explored the potential of wireless sensor networks and Internet of Things (IoT) technologies to enhance the performance of different IoT-based monitoring systems \cite{c21,c22}.
	Specifically, reference \cite{c17} uses a pressure sensor and voice warning system on a helmet to measure tire air pressure and transmit data via Bluetooth. However, it overlooks tire temperature and potential transmission delays. Reference  \cite{c19} designs an ID calibration device for a tire pressure monitoring system using Bluetooth and CAN bus data interaction, with a time-based task scheduler algorithm and a dual producer-consumer buffer. Reference  \cite{c20} proposes a basic monitoring system for tire mileage and wear estimation using G-sensors and Hall sensors, relying solely on Bluetooth for data collection and user interaction, neglecting the importance of other wireless technologies for vehicle safety.
	The work \cite{t1} presents a smart tire status monitoring system that detects vehicle mass but does not analyze sensor data to identify tire conditions. The study \cite{t2} designs an air pressure detection system using an IoT platform but fails to propose practical approaches to solve significant issues or enhance system performance. Reference \cite{t3} presents a Vehicle-to-Cloud interface utilizing 3G/4G connectivity and LoRaWAN. Reference \cite{t4} proposes an IoT-based fleet monitoring system utilizing cloud computing. However, both studies  \cite{t3,t4} neglect issues related to sensor transmission delay and system reliability.
	The study \cite{c11} proposes a hybrid tire status monitoring system that estimates the pressure of all four wheels using a single pressure sensor. The work \cite{c18} develops an IoT-based TPMS using separate sensors for measuring air pressure and temperature in real time. Reference \cite{t5} presents an indirect tire pressure monitoring system that uses existing sensors for vehicle dynamics control to detect tire pressure. The proposed system uses a modified transceiver structure with a temperature-compensated film bulk acoustic resonator (FBAR) and lowers the energy consumption of the wireless receiver system. However, these studies \cite{c11,c18,t5} do not consider real-time monitoring systems that utilize multiple wireless communication modes and analyze system reception delays.
	
	Furthermore, several studies \cite{R1, R2, R3, R5, R6} address the energy consumption issue for vehicle monitoring systems, particularly focusing on the sensor transmitter and receiver. Reference \cite{R1} proposes a method for identifying the vehicle's running state, with an emphasis on developing small-size and low-energy consumption systems. Reference \cite{R2} introduces a vehicle health monitoring system for accessing real-time information. Study \cite{R3} examines the energy consumption problem in connected vehicular applications and communication systems. Study \cite{R5} investigates the impact of sensor signal frequency on the power use of sensor transmitters. Study \cite{R6} examines tire slip angle estimation based on tire pressure sensors affecting transmitter power consumption.
	However, these existing studies \cite{R1, R2, R3, R5, R6} focused entirely on addressing the energy consumption issue while neglecting to consider sensor transmission delay.
	
	Hence, there is a need to design and implement an efficient IoT-enabled system for real-time tire status monitoring, energy-saving consideration, analysis of sensor transmission delay, and ensuring system reliability in intelligent vehicle systems. 
	\begin{figure} [!h]
		\centering
		\includegraphics[width=\linewidth]{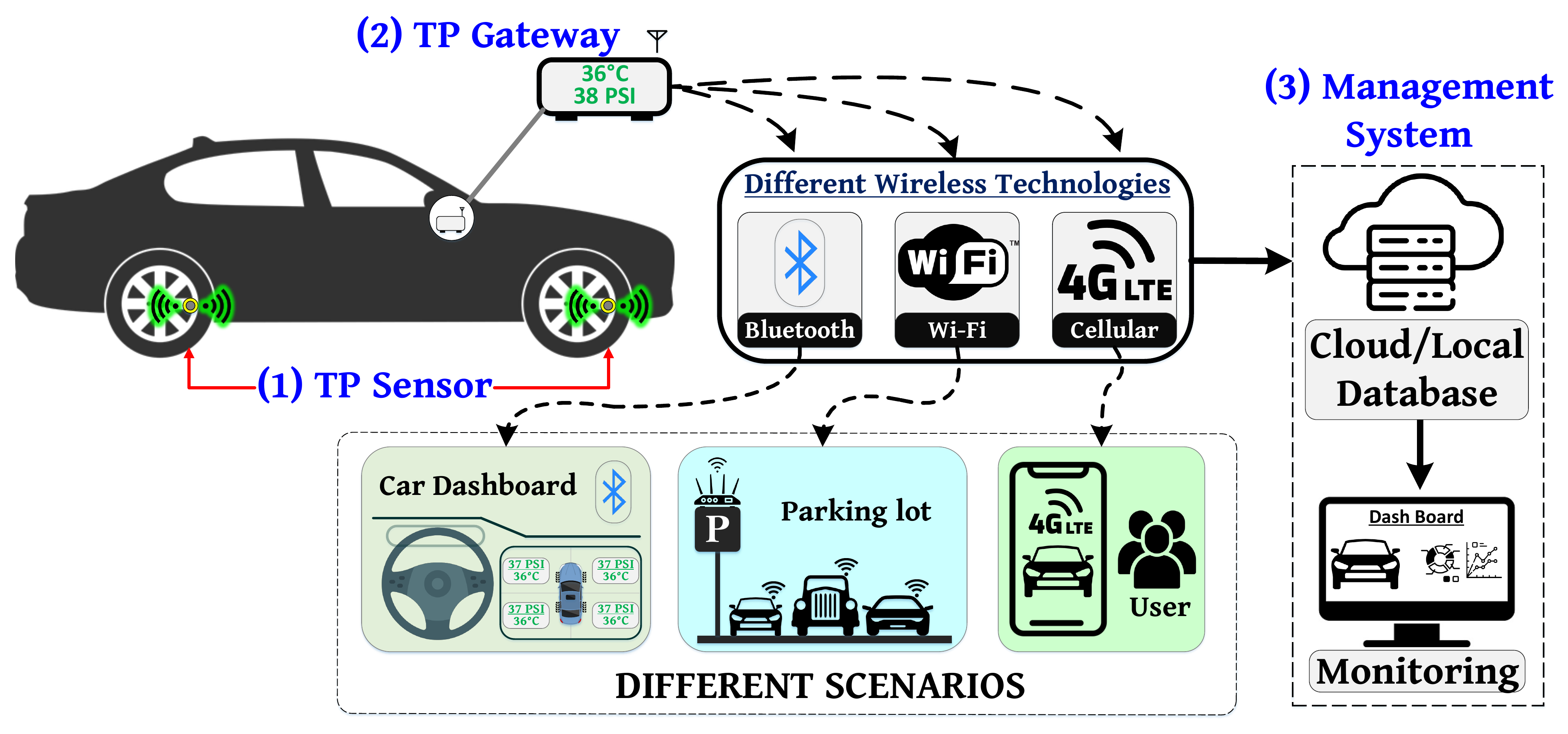}
		\caption{IoT-Enabled Wireless Tire Sensing System (WTSS)}
		\label{WTPSS}
	\end{figure}
	\section{System Overview}
	In this paper, we propose and develop a novel IoT-enabled wireless tire sensing system. The overview of the system is shown in Fig. \ref{WTPSS}. This system comprises wireless TP (temperature and pressure)\footnote{Tire temperature and pressure frequently influence each other \cite{c2, FN1}. Monitoring tire condition can lead to various benefits, including accident reduction, fuel consumption minimization, driving convenience improvement, braking distance reduction, and tire lifespan extension \cite{FN2, FN3}. It is worth noting that certain tire monitors also incorporate inertial components \cite{FN4, FN5}. This paper specifically concentrates on tire sensors measuring pressure and temperature exclusively.} sensors strategically positioned on each tire of the vehicle and a TP Gateway installed within the vehicle.
	
	The TP gateway simultaneously sends data to local and cloud servers via wireless communication systems, including Wi-Fi, Bluetooth, and cellular networks, depending on the required scenarios. 
	Specifically, this system consists of three main components: (1) Wireless TP Sensor, (2) TP Gateway, and (3) Management System, as shown in Fig.~\ref{architecture}.
	These components collaborate to facilitate seamless data reception, analysis, conversion, and real-time monitoring, ensuring efficient data processing and comprehensive management.
	\begin{figure} [t!]
		\centering
		\includegraphics[width=\linewidth]{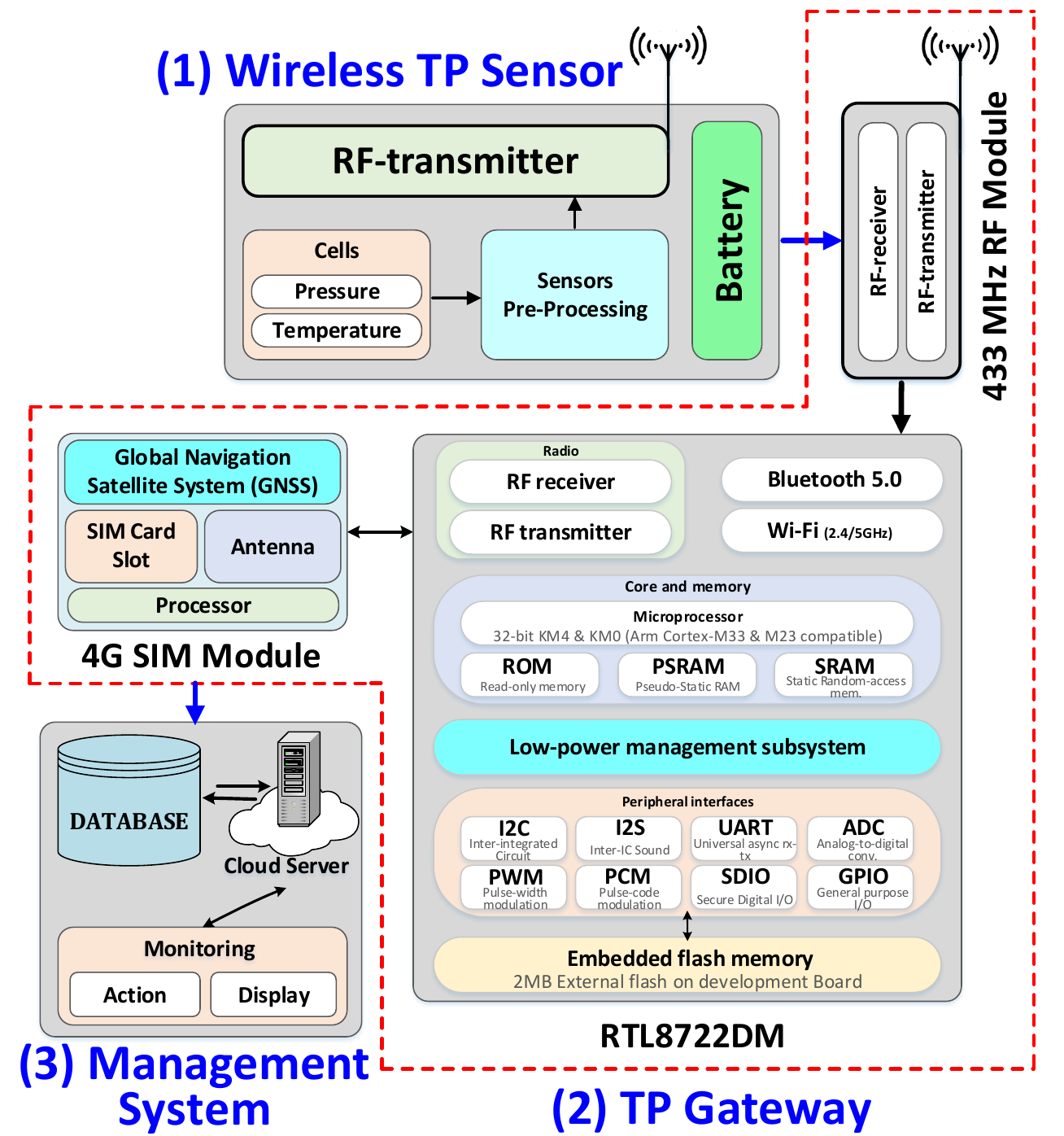}
		\caption{System Architecture of IoT-enabled WTSS}
		\label{architecture}
	\end{figure}
	
	1) Wireless TP Sensor \footnote{Currently, TP Sensor can continue to operate for several years, lasting until tire replacement, due to their low-end and uncontrollable design \cite{R4, R5, R6}. Disrupting the sensor's energy cycle or external interference could lead to unreliable data transmission or malfunction \cite{RRR1, RRR2}.} is battery powered and incorporates modules to collect accurate tire pressure and temperature data to wirelessly transmit the data through a 433 MHz antenna. The TP sensor operates on a fixed transmission duty-cycle (with the length of $C_L$ (in slots)) \cite{r_sensor} and periodically transmits data containing sensor ID, pressure, temperature, alert status, battery level, and sensor status, following a predefined format, as shown in Fig.~ \ref{ds}.
	\begin{figure} [!h]
		\centering
		\includegraphics[width=\linewidth]{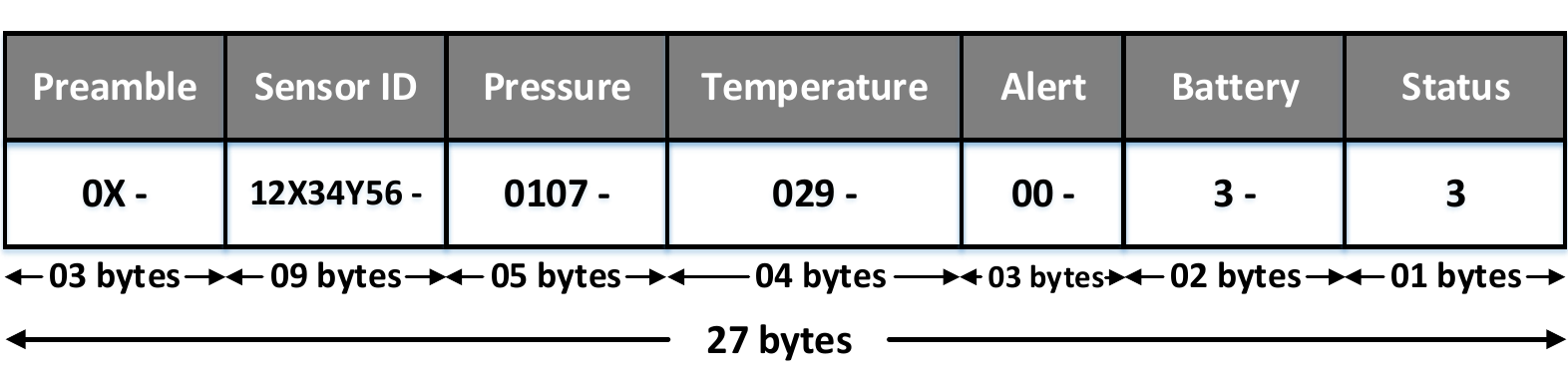}
		\caption{Example data format of a TP sensor \cite{r_sensor}}
		\label{ds}
	\end{figure}
	
	2) TP Gateway comprises three main modules, namely (a) 433 MHz RF Receiver Antenna, (b) RTL8722DM IoT Development Board, and (c) 4G SIM Module. Specifically, (a) 433 MHz RF receiver Antenna receives the encoded sensor data and transmits the signals to the RTL8722DM IoT development board for decoding into a suitable format. (b) RTL8722DM IoT Development Board is a 32-bit dual micro-controller unit and features dual-band Wi-Fi and BLE5 mesh, making it ideal for IoT applications. (c) 4G SIM Module provides extensive network coverage for 4G/LTE, 3G, and 2G bands, accurate positioning, and navigation with its built-in GNSS receiver.
	
	3) Management system represents the connection between a physical device and the platform service, encompassing a database for local and web servers that grant access to a Graphical User Interface (GUI) and mobile application. 
	
	The data transmission from the sensor to the TP Gateway is illustrated in Fig. \ref{problem}. In order to save system energy, the TP gateway is designed and enforced to perform wake-up (in $W$ slots) and sleep (in $S$ slots) to monitor the sensing data. Thus, it constitutes a sleep behavior with a cycle length of $W+S$ slots. Periodically, each sensor ($s_{i}, i = 1, 2, ..., N$) transmits data every $C_{L}$ slots to the system through the 433 MHz RF antenna. The TP gateway monitors the transmitted data during the designated wakeup slot ($W$). 
	If the data transmission collides with another transmission (denoted by $C_t$, i.e., the cross slot in the figure, where $t$ is the number of collisions and $t \in \mathbb{N}$) or occurs during the system sleeping slot, it may not be received by the system, leading to an expected delay in the receipt of the transmitted data.
	
	\begin{figure} [!h]
		\centering
		\includegraphics[width=\linewidth]{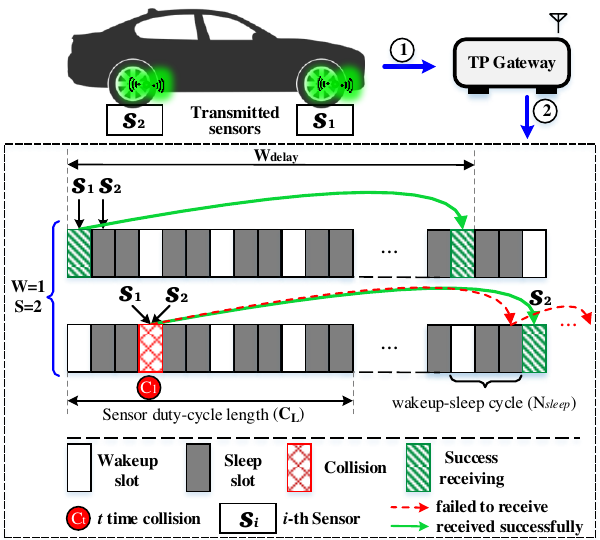}
		\caption{illustration of data transmission to the TP Gateway}
		\label{problem}
	\end{figure}
	
	In order to ensure timely data reception in the system and mitigate potential consequences of delays, it is crucial to address system reliability and data transmission delays by asking the following questions:
	
	\begin{enumerate}
		\item Does the system exhibit a finite maximum waiting time?
		\item How to quantify the total expected worst transmission delay ${E[W_{\text{delay}}^{\text{total}}]}$ of this system considering the collision probability?
		\item Can it effectively determine the optimal transmission parameters, such as the minimum sensor duty-cycles $C^\text{min}_L$ and the minimum number of wakeup-sleep cycles $N^{{min}}_{sleep}$ by considering additional factors to improve the system's reliability?
	\end{enumerate}
	
	Table \ref{table2} presents the notations used in this paper.
	
	\begin{table}[!h]
		\caption{Notations and Description}
		\label{table2}
		\begin{tabularx}{\columnwidth}{@{} ll *{2}{C} c @{}}
			\toprule
			\bf{Notations } & \bf{Description} \\ 
			\midrule
			
			$C_{L}$ & Duty-cycle length of a sensor transmission \\
			$C_{L}^{i}$ & The $i^{th}$ duty-cycle length of sensors \\
			$C_{L}^{min}$ & Minimum number of sensor duty-cycles to successfully \\
			$\quad$ & receive sensor transmission \\
			$E[W_{delay}^{total}]$ & Total expected worst transmission delay \\
			$N$ & Total number of sensors \\
			$N_{sleep}$ & Number of wakup-sleep cycles \\
			$N_{sleep}^{min}$ & Minimum number of wakup-sleep cycles to successfully \\
			$\quad$ & receive sensor transmission \\
			$Pr(c)$ & Probability of collision \\
			$Pr(s)$ & Probability of success \\
			$S$ & Sleeping length (in slots) \\
			$s_i$ & The $i^{th}$ sensor in the system \\
			$T_{n}$ & The sensor arrives at the $n^{th}$ time slot \\ 
			$T^{max}$ &  Worst sensor transmission \\ 
			$W$ & Wakeup length (in slots) \\
			$W_{delay}$ & Worst transmission delay \\
			$W+S$ & Wakeup-sleep cycle length \\
			\bottomrule
		\end{tabularx}
	\end{table}
	
	\section{The Proposed Analytic Scheme}
	In this section, we present an \emph{analytic scheme} for determining the worst transmission delay ($W_{delay}$) according to the worst sensor transmission ($T^{max}$) and minimum number of wakeup-sleep cycles ($N_{sleep}^{min}$). In order to identify $T^{max}$ and $N_{sleep}^{min}$, we considered various combinations of sleeping ($S$) slots, where wakeup $(W)$ is fixed as one slot for scheduling data in order to optimize energy consumption of the system. The scheme includes the number of transmitting sensors ($s_{i},i=1, 2, ..., N$), the number of wakeup-sleep cycles ($N_{sleep}$), the sensor duty-cycle length ($C_{L}$), and the worst sensor transmission $T^{max}$.
	
	\subsection{Analytic Scheme}
	\label{det_sch}
	In this scheme, we consider two cases: $Case$ $1$ is when $GCD(C_{L}, [W+S])\neq 1$ whereas $Case$ $2$ is when $GCD(C_{L}, [W+S])=1$. Additionally, we also consider sensor transmission collision scenarios to quantify the total expected worst transmission delay $E[W_{delay}^{total}]$. 
	
	First, we need to find the sensor incurring the worst transmission delay. Without loss of generality, we assume it first arrives at the $n^{th}$ slot, denoted by $T_n$, where $1 \leq n \leq C_L$. The initial step involves identifying whether the sensor $T_{n}$ can be received under the wakeup-sleep operations or not. We consider the wakeup-sleep cycle starting with a wakeup slot. Then, finding the worst sensor transmission $T^{max}$ can be modeled as the scenario where a sensor arrives at the $n$-th slot but fails to receive until several duty-cycles passed, i.e., $n+C^{\text{min}}_L\cdot C_L$, and is finally received at a $W$ slot over several wakeup-sleep cycles, i.e., $W+N^{\text{min}}_{sleep} \cdot[W+S]$. This is expressed as follows.
	\begin{equation}
		\begin{aligned}
			T^{\text{max}}=n^*=\mathop{\arg \max}\limits_{n=1,2,...,C_{L}} \{ n+C^{\text{min}}_L\cdot C_L \mid n+C^{\text{min}}_L \cdot C_L \\= W+N^{\text{min}}_{sleep}\cdot[W+S]\}.
			\label{tmax}
		\end{aligned}
	\end{equation}
	Here, $C^{\text{min}}_L$ and $N^{min}_{sleep}$ are the minimum number of sensor duty-cycles and minimum number of wakeup-sleep cycles that a sensor transmission $T_n$ arrives to meet a wakeup slot, respectively. 
	
	In the following, we show that $T^{max}$ exists only when the sensor duty-cycle length $C_L$ and the wakeup-sleep cycle length $[W+S]$ are relative prime. This is expressed in Theorem \ref{th001}:
	\begin{theorem} The system experiences worst sensor transmission upon successful sensor transmission reception, if 
		\begin{equation}
			T^{\text{max}} =
			\left\{
			\begin{array}{ll}
				C_{L}-S & \mbox{if }GCD(C_{L}, [W+S]) = 1\\
				\textit{Not exist}  & \mbox{if }GCD(C_{L}, [W+S]) \neq 1 
			\end{array}
			\right..
			\label{th01}
		\end{equation}
		\label{th001}
	\end{theorem}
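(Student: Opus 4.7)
My plan is to translate the reception condition into a linear congruence and analyse when it has a minimum nonnegative solution. A sensor first arriving at slot $n$ transmits at slots $n + k C_L$ ($k = 0, 1, 2, \ldots$), while the gateway's wakeup slots are $W + j(W+S)$ ($j \geq 0$). Reception therefore occurs exactly when $n + k C_L \equiv W \pmod{W+S}$, i.e.\ $k C_L \equiv W - n \pmod{W+S}$. By elementary number theory, this has a solution in $k$ if and only if $d = \gcd(C_L, W+S)$ divides $W - n$.

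For the case $d > 1$, the residue $(W - n) \bmod d$ varies as $n$ ranges over $\{1,\ldots,C_L\}$, so some $n$ satisfies $d \nmid (W - n)$; the corresponding sensor is never received, the worst delay is unbounded, and $T^{\text{max}}$ does not exist. For the coprime case $d = 1$, $C_L$ is invertible modulo $W+S$, so each $n$ admits a unique minimum $k(n) \in \{0,1,\ldots,W+S-1\}$, with total time-to-reception $V(n) = n + k(n) C_L$. I would then proceed in two short steps: (i) the maximum possible value of $k(n)$ equals $W+S-1$, attained precisely on the residue class $n \equiv W + C_L \pmod{W+S}$, obtained by solving $(W+S-1) C_L \equiv W - n$; (ii) the largest representative of this class inside $\{1,\ldots,C_L\}$ is $n^* = C_L - S$, since $(C_L - S) - (W + C_L) = -(W+S) \equiv 0 \pmod{W+S}$, while the next representative $C_L - S + (W+S) = C_L + W$ exceeds $C_L$.

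To finish, I would verify that $n^*$ is the \emph{global} maximizer of $V(n)$, not merely the best inside the top $k$-class. For any $n$ with $k(n) \leq W+S-2$, one has $V(n) \leq C_L + (W+S-2) C_L = (W+S-1) C_L$, whereas $V(n^*) = (C_L - S) + (W+S-1) C_L = (W+S) C_L - S$, and the inequality $(W+S-1) C_L < (W+S) C_L - S$ reduces to $S < C_L$, which is the natural operating regime of the system. Ruling out competitors with smaller $k$ but larger $n$ is the one subtlety I anticipate; everything else is a direct B\'ezout / modular-inverse argument. Setting $T^{\text{max}} = n^*$ then yields the stated value $C_L - S$, while the incompatibility argument already handled the non-coprime case.
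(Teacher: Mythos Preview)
Your proposal is correct and reaches the same conclusion as the paper, but by a cleaner and more direct route. The paper decomposes the theorem into four lemmas and two corollaries: it first invokes B\'ezout to show solvability when $\gcd(C_L,W+S)=1$ (Lemma~1.1), then separately computes $C_L^{\min}=S$ at $n=C_L-S$ (Lemma~1.2), establishes the bound $C_L^{\min}\le S$ via a congruence argument (Corollary~1.2.1), shows the $C_L^{\min}$ values exhaust all residues (Corollary~1.2.2), and finally proves maximality of $n=C_L-S$ by a two-case contradiction on $n'<C_L-S$ versus $n'>C_L-S$ (Lemma~1.3); Lemma~1.4 handles the non-coprime case by exhibiting $n=W+1$. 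Your argument collapses all of this into a single modular-inverse step: once $C_L$ is invertible modulo $W+S$, the minimal $k(n)$ is uniquely determined in $\{0,\dots,W+S-1\}$, you read off the residue class achieving $k=W+S-1$, pick its largest representative in $\{1,\dots,C_L\}$, and dispatch the remaining classes with one inequality. The paper's decomposition makes each ingredient explicit (and yields the auxiliary values $C_L^{\min}=S$, $N_{\text{sleep}}^{\min}=C_L-1$ used later in Lemma~1.5), whereas your approach is shorter and exposes the underlying structure more transparently; both rely on the standing assumption $C_L>S$.
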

	We employ the following Lemmas $\ref{lemma1}\sim \ref{lemma4}$ to constitute Theorem \ref{th001}.
	\begin{lem}
		\label{lemma1}
		The system exists a feasible number of sensor duty-cycles and number of wakeup-sleep cycles to successfully receive all possible arrivals of sensor transmissions if $C_L$ and $[W+S]$ are relative prime, i.e., if $GCD(C_{L}, [W+S]) = 1$,  it exists minimal integers $C^{\text{min}}_L \in \mathbb{N}$ and $N^{\text{min}}_{sleep} \in \mathbb{N}$ that satisfies the equation $n+C^{\text{min}}_L \cdot C_L=W+ N^{\text{min}}_{sleep}\cdot [W+S]$ for all $n=1,2,....,C_{L}$.
	\end{lem}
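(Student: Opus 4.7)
\textbf{Proof plan for Lemma \ref{lemma1}.} The plan is to recast the statement as a linear Diophantine equation in the two unknowns $C^{\min}_L$ and $N^{\min}_{sleep}$, and then to exhibit a minimal positive solution using B\'ezout's identity together with the coprimality hypothesis $\gcd(C_L,[W+S])=1$.

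First, I would rewrite the target equation $n + C^{\min}_L \cdot C_L = W + N^{\min}_{sleep}\cdot[W+S]$ in the form
\begin{equation*}
C^{\min}_L \cdot C_L \;-\; N^{\min}_{sleep}\cdot[W+S] \;=\; W - n,
\end{equation*}
so that the question becomes: for each fixed $n\in\{1,2,\dots,C_L\}$, does the equation $C_L\,x - [W+S]\,y = W-n$ admit a solution $(x,y)\in\mathbb{N}\times\mathbb{N}$, and if so what is the componentwise minimal one? By B\'ezout's identity, the coprimality assumption $\gcd(C_L,[W+S])=1$ yields integers $a,b\in\mathbb{Z}$ with $a\,C_L + b\,[W+S]=1$. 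Multiplying through by $W-n$ produces a particular (possibly negative) integer solution $(x_0,y_0)=((W-n)a,\,-(W-n)b)$.

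Next I would describe the full solution set. Since $\gcd(C_L,[W+S])=1$, the general integer solution is
\begin{equation*}
x \;=\; x_0 + t\,[W+S], \qquad y \;=\; y_0 + t\,C_L, \qquad t \in \mathbb{Z}.
\end{equation*}
As $t$ increases by one, both $x$ and $y$ strictly increase (by $[W+S]$ and $C_L$ respectively), so the set of $t$ that makes both coordinates positive integers is a half-line $\{t \ge t^*\}$ for some integer $t^*$. Taking $t=t^*$ gives the simultaneously minimal pair, which I would designate $(C^{\min}_L,\,N^{\min}_{sleep})$, and this establishes existence of a feasible minimal number of duty-cycles and wakeup-sleep cycles for every $n\in\{1,\dots,C_L\}$.

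The main subtlety, and the step that needs the most care, is arguing that a single integer $t^*$ works for \emph{both} coordinates, i.e.\ that minimizing $x$ among positive integers automatically minimizes $y$ among positive integers. This follows from the monotonicity just noted (both grow linearly in $t$ with positive slopes $[W+S]$ and $C_L$), but it relies on the coprimality assumption to guarantee that the parametrization is one-dimensional with no gaps; if $\gcd(C_L,[W+S])\neq 1$, the right-hand side $W-n$ might fail to be divisible by $\gcd$, solutions could disappear for some $n$, and this is exactly the obstruction exploited in the complementary case of Theorem \ref{th001}. I would close by noting that under the hypothesis this obstruction vanishes, so such minimal $C^{\min}_L,N^{\min}_{sleep}\in\mathbb{N}$ are guaranteed to exist for every $n\in\{1,2,\dots,C_L\}$, completing the proof.
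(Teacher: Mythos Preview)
Your proposal is correct and follows essentially the same strategy as the paper: rewrite the reception condition as a linear Diophantine equation, invoke B\'ezout's identity under the coprimality hypothesis to obtain a particular integer solution, and then extract a minimal nonnegative pair. Your final step---using the joint monotonicity of the one-parameter family $(x_0+t[W+S],\,y_0+tC_L)$ to identify a componentwise minimal solution---is in fact more explicit than the paper's brief appeal to the well-ordering principle, but the underlying idea is the same.
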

	\begin{proof}[\bfseries{Proof}]
		We use a direct proof method as follows. First, we rewrite Eq. \eqref{tmax} as follows:
		\begin{equation}
			\begin{aligned}
				N^{\text{min}}_{sleep}\cdot[W+S]-C^{\text{min}}_L\cdot C_L= n-W.
				\label{p01}
			\end{aligned}
		\end{equation}
		Since $GCD(C_{L}, [W+S]) = 1$, it means that $C_L$ and $[W+S]$ can be represented as a linear combination resulting in $1$ according to Bézout’s theorem \cite{R7}, i.e.,
		\begin{equation*}
			\begin{aligned}
				GCD(C_{L}, [W+S]) = 1=\alpha\cdot[W+S]+\beta\cdot C_L,
				\label{p02}
			\end{aligned}
		\end{equation*}
		where $	\exists\alpha, 	\exists\beta \in \mathbb{Z}$. Thus, for each $n=1,2,....,C_{L}$, we can find an integer multiple of $\alpha$ and $\beta$ to satisfy Eq. \eqref{p01}, i.e., $N^{\text{min}}_{sleep}\cdot[W+S]-C^{\text{min}}_L\cdot C_L$
		\begin{equation}
			\begin{aligned}
				&= (n-W)\\
				&=(n-W)(\alpha\cdot[W+S]+\beta\cdot C_L) \\
				&=((n-W)\cdot \alpha)[W+S]+((n-W)\cdot \beta )C_L.
				\label{p04}
			\end{aligned}
		\end{equation}
		This means that $N^{\text{min}}_{sleep}$ and $C^{\text{min}}_L$ have feasible solutions to satisfy Eq. \eqref{tmax} when $N^{\text{min}}_{sleep}=(n-W)\cdot \alpha$ and $C^{\text{min}}_{L}=-(n-W)\cdot \beta$ for all $n=1,2,....,C_{L}$. Since $\alpha$, $\beta$ have multiple pairs, by well-ordering principle \cite{well}, it exists a pair incurring the minimal and positive solutions of $N^{\text{min}}_{sleep}\in \mathbb{N}$ and $C^{\text{min}}_L \in \mathbb{N}$ to satisfy. Under this condition, the system has a "finite" waiting time to receive all possible sensor transmissions.
	\end{proof}
	\begin{lem}
		\label{lemma2}
		When the system has a finite waiting time, the minimum number of sensor duty-cycles and minimum number of wakeup-sleep cycles required to successfully receive the worst sensor transmissions are $C^{\text{min}}_L=S$ and $N^{\text{min}}_{sleep}=C_L-1$ when $n=C_L-S$, under the condition that $C_L>S$ and $C_L$ and $[W+S]$ are relative prime.
	\end{lem}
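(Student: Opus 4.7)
The plan is to prove Lemma~\ref{lemma2} by splitting it into two parts: (i) verify that the triple $(n, C^{\text{min}}_L, N^{\text{min}}_{sleep}) = (C_L-S,\, S,\, C_L-1)$ is a feasible and \emph{minimal} solution of Eq.~\eqref{tmax}, and (ii) argue that no other admissible arrival $n\in\{1,\ldots,C_L\}$ can yield a later reception time $n + C^{\text{min}}_L(n)\cdot C_L$. Throughout I would rely on the coprimality $\gcd(C_L,W+S)=1$ inherited from Lemma~\ref{lemma1} and on the scheme's convention $W=1$.

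First I would verify feasibility by direct substitution: $(C_L-S) + S\cdot C_L = C_L(S+1) - S = 1 + (C_L-1)(S+1)$, which matches $W + N^{\text{min}}_{sleep}(W+S)$. Next, to establish that $C^{\text{min}}_L = S$ is minimal at this $n$, I would reduce the defining equation modulo $S+1$. Any alternative non-negative pair $(C',N')$ satisfying $(C_L-S) + C' C_L = 1 + N'(S+1)$ forces $(C'+1)C_L \equiv 0 \pmod{S+1}$; together with $\gcd(C_L,S+1)=1$ this yields $C' \equiv -1 \equiv S\pmod{S+1}$, so the smallest admissible $C'$ is $S$, with matching $N' = C_L-1$. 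The standing hypothesis $C_L>S$ is exactly what places $n=C_L-S$ inside the admissible range $\{1,\ldots,C_L\}$.

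The more delicate step is to show that $n=C_L-S$ is globally worst. Invoking Lemma~\ref{lemma1}, the minimum $C^{\text{min}}_L(n)$ for any arrival $n$ is determined by $(1-n)\cdot C_L^{-1} \bmod (S+1)$ and therefore lies in $\{0,1,\ldots,S\}$, giving the naive bound $R(n) := n + C^{\text{min}}_L(n)\cdot C_L \leq (S+1)C_L$. However, every reception time is a wakeup slot, so $R(n) \equiv W \equiv 1 \pmod{S+1}$, whereas $(S+1)C_L \equiv 0 \pmod{S+1}$; hence the largest value of $R(n)$ compatible with both constraints is exactly $(S+1)C_L - S$. Solving $n + C^{\text{min}}_L(n)\cdot C_L = (S+1)C_L - S$ under $n\leq C_L$ and $C^{\text{min}}_L(n)\leq S$ then pins $n=C_L-S$ and $C^{\text{min}}_L(n)=S$ down uniquely, since any smaller $C^{\text{min}}_L(n)$ would force $n>C_L$ whenever $C_L>S$.

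The main obstacle I anticipate is this last uniqueness argument: intuitively, one worries that some other $n$ close to $C_L$ could compensate with a smaller $C^{\text{min}}_L(n)$ and still produce a large $R(n)$. The congruence $R(n) \equiv 1 \pmod{S+1}$ is what rules this out cleanly, so the clearest presentation will funnel the entire comparison through this single modular identity rather than through a case analysis on individual values of $C^{\text{min}}_L(n)$.
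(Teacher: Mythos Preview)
Your part~(i) is essentially the paper's proof of Lemma~\ref{lemma2}: the paper rewrites Eq.~\eqref{tmax} as $N^{\text{min}}_{sleep}=(n+C^{\text{min}}_L C_L-1)/(1+S)\in\mathbb{N}$, substitutes $n=C_L-S$, simplifies to the divisibility condition $(1+S)\mid C_L(C^{\text{min}}_L+1)$, and then invokes $\gcd(C_L,1+S)=1$ to force $(1+S)\mid(C^{\text{min}}_L+1)$, whence $C^{\text{min}}_L=S$. Your congruence $(C'+1)C_L\equiv 0\pmod{S+1}\Rightarrow C'\equiv S\pmod{S+1}$ is exactly the same computation in modular dress.

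Your part~(ii), however, is not part of the paper's Lemma~\ref{lemma2} at all: the paper defers the global maximality claim ($n=C_L-S$ is the worst arrival) to the separate Lemma~\ref{lemma3}, and the proof of Lemma~\ref{lemma2} stops once $C^{\text{min}}_L=S$ and $N^{\text{min}}_{sleep}=C_L-1$ are established at that fixed $n$. Moreover, where the paper proves Lemma~\ref{lemma3} by a two-case contradiction argument (treating $n'<C_L-S$ and $n'>C_L-S$ separately and bounding $n'+C^{\text{min}'}_L C_L$ in each case, with Corollaries~\ref{cor1}--\ref{cor2} supplying the bound $C^{\text{min}}_L\le S$), your route is genuinely different and cleaner: you observe that every reception time satisfies $R(n)\equiv 1\pmod{S+1}$, combine this with the crude bound $R(n)\le (S+1)C_L$ to get $R(n)\le (S+1)C_L-S$, and then read off uniqueness. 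The congruence trick avoids the paper's case split entirely; the price is that it leans on the existence of $C_L^{-1}\bmod(S+1)$, whereas the paper's Corollary~\ref{cor1} gets $C^{\text{min}}_L\le S$ by a more bare-hands remainder argument. Either way, for the statement of Lemma~\ref{lemma2} as the paper scopes it, only your part~(i) is needed.
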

	\begin{proof}[\bfseries{Proof}]
		First, we show that when $n= C_L-S$, it has $C_L^{min}=S$. We use a direct proof as follows. According to Eq. \eqref{tmax}, it can be written as
		\begin{equation}
			\begin{aligned}
				N^{\text{min}}_{sleep}=\frac{n+C^{\text{min}}_L\cdot {C_L}-W}{[W+S]} \in \mathbb{N}.
				\label{eqlem2}
			\end{aligned}
		\end{equation}
		By applying $n=C_L-S$, we have	
		\begin{equation}
			\begin{aligned}
				Eq. \eqref{eqlem2} &\Rightarrow \frac{(C_L-S)+C^{\text{min}}_L{C_L}-W}{[W+S]}\\ &=\frac{C_L+C^{\text{min}}_L{C_L}-{(W+S)}}{[W+S]}.
				\label{lem1.2}
			\end{aligned}
		\end{equation}
		Because we have to ensure the result of Eq. \eqref{lem1.2} belonging to $\mathbb{N}$ when $W=1$ is set up in this scheme, it means that the following equation should belong to $\mathbb{N}$ as wall, i.e.,
		\begin{equation}
			\begin{aligned}
				Eq. \eqref{lem1.2} \Rightarrow \frac{C_L(C^{\text{min}}_L+1)}{[1+S]} \in \mathbb{N}.
				\label{lem1.2.2}
			\end{aligned}
		\end{equation}
		Since $GCD(C_L,[W+S])=1$, it means that Eq. \eqref{lem1.2.2} has to satisfy $[W+S]\mid C^{\text{min}}_L+1$. As we know that the minimum number of sensor duty-cycles should be larger or equal to $0$, i.e., $C^{\text{min}}_L\geq0$. Thus, the smallest value of $C^{\text{min}}_L$  to satisfy Eq. \eqref{lem1.2.2} will be $C^{\text{min}}_L=S$ because $C_L$ and $[1+S]$ have no common factor and $C^{\text{min}}_L$ cannot satisfy if $C^{\text{min}}_L<S$. Based on this result, it yields
		\begin{equation*}
			\begin{aligned}
				N^{\text{min}}_{sleep}=\frac{(C_L-S)+S\cdot C_L-1}{[1+S]}&=\frac{C_L(1+S)-(1+S)}{[1+S]}\\
				&=C_L-1.
				\label{lem1.2.3}
			\end{aligned}
		\end{equation*}
	\end{proof}
	Before introducing \emph{Lemma} \ref{lemma3}, we show the following two corollaries first.
	\begin{corollary}
		\label{cor1}
		When the system has a finite waiting time, the minimum number of duty-cycle $C^{\text{min}}_L$ will not exceed $S$, i.e., $C^{\text{min}}_L\leq S$ if $GCD(C_L, [W+S])=1$.
	\end{corollary}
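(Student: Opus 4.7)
The plan is to combine the characteristic equation established in Lemma \ref{lemma1} with a short modular-arithmetic argument. Starting from the relation
\begin{equation*}
N^{\text{min}}_{sleep}\cdot[W+S]-C^{\text{min}}_L\cdot C_L = n - W,
\end{equation*}
which holds for every $n\in\{1,2,\ldots,C_L\}$ whenever the system has a finite waiting time, I would reduce both sides modulo $[W+S]$ to obtain
\begin{equation*}
-C^{\text{min}}_L\cdot C_L \equiv (n-W) \pmod{[W+S]}.
\end{equation*}
Since the hypothesis $GCD(C_L,[W+S])=1$ guarantees that $C_L$ is invertible modulo $[W+S]$, this congruence determines $C^{\text{min}}_L$ uniquely as a residue class modulo $[W+S]$.

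Next, I would invoke the well-ordering principle (already used in Lemma \ref{lemma1}) to pick the smallest non-negative representative of this residue class. By construction, that representative lies in $\{0,1,\ldots,[W+S]-1\}$. Applying the scheme's convention $W=1$, the admissible range collapses to $\{0,1,\ldots,S\}$, which yields exactly $C^{\text{min}}_L\leq S$, as claimed.

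The main obstacle is to make rigorous the jump from ``a feasible pair $(C^{\text{min}}_L,N^{\text{min}}_{sleep})$ exists" (Lemma \ref{lemma1}) to ``the \emph{minimal} feasible $C^{\text{min}}_L$ does not exceed $S$." The subtle point is that one must verify that decreasing $C^{\text{min}}_L$ by $[W+S]$ (and adjusting $N^{\text{min}}_{sleep}$ accordingly by $-C_L$) remains a valid representation of the same arrival instant, so that no feasible value of $C^{\text{min}}_L$ exceeding $S$ can be minimal. Once this reduction step is justified, the result follows immediately, and one may observe that the bound is tight via Lemma \ref{lemma2}, which attains equality $C^{\text{min}}_L=S$ at $n=C_L-S$.
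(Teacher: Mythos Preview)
Your proposal is correct and follows essentially the same modular-arithmetic route as the paper: both reduce the defining equation modulo $[W+S]$ and argue that any candidate $C^{\text{min}}_L$ exceeding $S$ can be replaced by a congruent value in $\{0,\ldots,S\}$ without destroying feasibility. Your presentation is in fact more streamlined---you invoke the invertibility of $C_L$ directly, whereas the paper introduces auxiliary remainders $\theta$ and $r$ before reaching the same conclusion---but the underlying argument is identical.
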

	\begin{proof}[\bfseries{Proof}]
		According to Eq. \eqref{eqlem2}, the numerator can be separated as two parts: $(n-1)$ and $(C^{\text{min}}_L \cdot C_L)$.
		\begin{equation*}
			\begin{aligned}
				\frac{(n-1)+(C^{\text{min}}_L \cdot C_L)}{[1+S]} \in \mathbb{N},
			\end{aligned}
		\end{equation*}
		it means that any remainder of $(C^{\text{min}}_L \cdot C_L)$ divided by $(1+S)$ should be deducted by the remainder of $(n-1)$ divided by $(1+S)$. Without loss of generality, we let $\theta \in [\theta,..,S]$ be the remainder of $(C^{\text{min}}_L \cdot C_L)$ divided by $(1+S)$. The above property can be written as
		\begin{equation}
			\begin{aligned}
				C^{\text{min}}_L \cdot C_L\mod{(1+S)}=\theta
				\label{col1.1}
			\end{aligned}
		\end{equation}
		and	
		\begin{equation}
			\begin{aligned}
				n-1\mod{(1+S)}=-\theta.
				\label{col1.2}
			\end{aligned}
		\end{equation}
		Also, we let $r$ be the remainder of $C_L$ divided by $(1+S)$, i.e., $C_L=q(1+S)+r$, where $q\in \mathbb{N}$ is a quotient. We can rewrite Eq. \eqref{col1.1} as
		\begin{align}
			\theta&=C^{\text{min}}_L \cdot C_L \mod (1+S)\notag \\
			&=C^{\text{min}}_L \cdot (r) \mod (1+S).\notag \\
			&\Rightarrow C^{\text{min}}_L \cdot C_L \equiv C^{\text{min}}_L \cdot (r) \equiv \theta \mod (1+S).
			\label{col1.3}
		\end{align}
		Based on Eq. \eqref{col1.3}, Eq. \eqref{col1.2} can be written as
		\begin{equation*}
			\begin{aligned}
				n-1\mod{(1+S)}&=-\theta \\
				&=-(C^{\text{min}}_L \cdot (r) \mod (1+S)) \\
				&=C^{\text{min}}_L \cdot (-r) \mod (1+S).
			\end{aligned}
		\end{equation*}
		\begin{equation}
			\begin{aligned}
				\Rightarrow n-1\equiv C^{\text{min}}_L \cdot (-r) \equiv -\theta \mod (1+S).
				\label{col1.4}
			\end{aligned}
		\end{equation}
		This can conclude two important properties. First, for those $n$ satisfying Eq. \eqref{col1.4}, they have the same $C^{\text{min}}_L$, i.e., 
		\begin{equation*}
			\begin{aligned}
				n-1\equiv C^{\text{min}}_L \cdot (-r) \equiv -\theta \mod (1+S)
			\end{aligned}
		\end{equation*}
		\begin{equation}
			\begin{aligned}
				\Rightarrow n=k(1+S)-\theta + 1, k \in \mathbb{N}^+.
				\label{col1.5}
			\end{aligned}
		\end{equation}
		Second, from Eq. \eqref{col1.4}, $C^{\text{min}}_L$ will not exceed $S$. We show it as follows.
		
		Assume there is a $C^{\text{min}'}_L >S$ and we can find a $C^{\text{min}}_L\leq S$ to represent as $C^{\text{min}'}_L =\Delta (1+S)+C^{\text{min}}_L$, where $\Delta \in \mathbb{N}$. From Eq. \eqref{col1.4}, we can derive it as follows.
		\begin{equation*}
			\begin{aligned}
				C^{\text{min}'}_L \cdot (-r) &\mod (1+S) \\
				&=(\Delta(1+S)+C^{\text{min}}_L)(-r) \mod (1+S) \\
				&=C^{\text{min}}_L(-r) \mod (1+S) \\
				&\Rightarrow C^{\text{min}'}_L(-r) \equiv C^{\text{min}}_L (-r) \mod (1+S).
				\label{col1.6}
			\end{aligned}
		\end{equation*}
		Thus, for those $C^{\text{min}'}_L >S$, the results will be congruent to that of $C^{\text{min}}_L\leq S$. So, we have the minimum number of duty-cycle $C^{\text{min}}_L\leq S$. 	
	\end{proof}
	\begin{corollary}
		\label{cor2}
		For all $C^{\text{min}}_L\in [0,..,S]$, their corresponding series of $n$ to satisfy Eq. \eqref{eqlem2} will constitute all possible transmission arrivals.
	\end{corollary}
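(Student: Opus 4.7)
The plan is to leverage Corollary~\ref{cor1} together with a standard bijection argument from elementary number theory. From Corollary~\ref{cor1} we already know that every valid $C_L^{\text{min}}$ lies in $\{0,1,\ldots,S\}$, and Eq.~\eqref{col1.5} gives the full family of $n$ satisfying Eq.~\eqref{eqlem2} for a given $C_L^{\text{min}}$, namely $n = k(1+S) - \theta + 1$ with $\theta \equiv C_L^{\text{min}}\cdot r \pmod{1+S}$ and $r = C_L \bmod (1+S)$. Hence the corollary reduces to showing that, as $C_L^{\text{min}}$ ranges over $\{0,1,\ldots,S\}$, the associated residue classes of $n$ modulo $(1+S)$ exhaust \emph{all} residue classes, so that every possible arrival slot $n \in \{1,2,\ldots,C_L\}$ is covered by exactly one value of $C_L^{\text{min}}$.

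First I would invoke the hypothesis $GCD(C_L, 1+S)=1$ to conclude that $r$ is a unit modulo $(1+S)$. Therefore the map
\begin{equation*}
\Phi : \{0,1,\ldots,S\} \to \mathbb{Z}/(1+S)\mathbb{Z}, \quad C_L^{\text{min}} \mapsto C_L^{\text{min}}\cdot r \bmod (1+S)
\end{equation*}
is a bijection, since multiplication by a unit permutes a complete residue system. Consequently, the values $\theta$ produced by sweeping $C_L^{\text{min}}$ through $\{0,1,\ldots,S\}$ take each element of $\{0,1,\ldots,S\}$ exactly once.

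Next I would translate this back to $n$ via Eq.~\eqref{col1.4}: each $\theta$ fixes the class $n \equiv 1-\theta \pmod{1+S}$. Because the $\theta$'s range over a complete residue system, so do the corresponding classes of $n$, and every integer $n \in \{1,2,\ldots,C_L\}$ belongs to exactly one such class. Choosing the appropriate $k \in \mathbb{N}^+$ in Eq.~\eqref{col1.5} then recovers that particular $n$, so every candidate arrival slot is realized by some pair $(C_L^{\text{min}}, k)$ with $C_L^{\text{min}} \le S$.

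The main obstacle, and the only part that needs care, is verifying that the bijection $\Phi$ is well-defined and truly exhaustive: one has to check that distinct values of $C_L^{\text{min}}$ in $\{0,\ldots,S\}$ cannot collapse to the same $\theta$ (which would leave some arrival slots uncovered) and that no spurious $\theta$ outside $\{0,\ldots,S\}$ appears. Both follow immediately from the coprimality $GCD(C_L,1+S)=1$, since any collision $C_L^{\text{min}} r \equiv C_L^{{\text{min}}'} r \pmod{1+S}$ would force $(1+S) \mid (C_L^{\text{min}}-C_L^{{\text{min}}'})$, impossible for two distinct members of $\{0,\ldots,S\}$. With uniqueness and surjectivity of $\Phi$ established, the corollary is immediate: the $S+1$ arithmetic progressions of $n$, one per $C_L^{\text{min}}$, tile the set $\{1,2,\ldots,C_L\}$ of all possible transmission arrivals.
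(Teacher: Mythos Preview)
Your proposal is correct and follows essentially the same route as the paper: both arguments show that the map $C_L^{\text{min}}\mapsto\theta$ is injective on $\{0,\ldots,S\}$ by cancelling the factor $r$ (using $GCD(C_L,1+S)=1$), deduce surjectivity by counting, and then invoke Eq.~\eqref{col1.5} to conclude that the resulting residue classes of $n$ partition all arrivals. Your write-up is somewhat cleaner in that it names the bijection explicitly and appeals to the standard fact that multiplication by a unit permutes a complete residue system, whereas the paper phrases the injectivity step as a short proof by contradiction, but the mathematical content is identical.
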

	\begin{proof}[\bfseries{Proof}]
		First, we show that for any pair of $C^{\text{min}'}_L,C^{\text{min}}_L \in [0,..,S]$ and $C^{\text{min}'}_L \neq C^{\text{min}}_L$, their corresponding $\theta'$ and $\theta$, respectively, will not be the same (i.e., $\theta' \neq \theta$), if $\theta'=C^{\text{min}'}_L \cdot C_L \mod (1+S)$ and $\theta=C^{\text{min}}_L \cdot C_L \mod (1+S)$, respectively.
		
		Here, we use a contradiction proof by assuming if $C^{\text{min}}_L \neq C^{\text{min}'}_L$ and then $\theta' = \theta$ is true. This result implies that
		\begin{equation*}
			\begin{aligned}
				\theta'=\theta &\Rightarrow \theta' \equiv \theta \mod (1+S) \\
				&\Rightarrow C^{\text{min}'}_L (r) \equiv C^{\text{min}}_L(r) \mod (1+S) \\
				&\Rightarrow C^{\text{min}'}_L = C^{\text{min}}_L \cdot (\rightarrow \leftarrow).
				\label{col2.1}
			\end{aligned}
		\end{equation*}
		This contradicts to the assumption of $C^{\text{min}}_L \neq C^{\text{min}'}_L$. As a result, each $C^{\text{min}}_L \in [0,..,S]$ has its own and individual congruence $\theta \in [0,..,S]$.
		
		From this result, we can conclude that when $C^{\text{min}}_L=0,1,..,S$, it totally has $(1+S)$ possible outcomes of $n$ from Eq. \eqref{col1.5} due to its individual and difference congruence $\theta \in [0,..,S]$. According to Eq. \eqref{col1.5}, since difference and disjoint congruence $\theta$ will constitute different and disjoint sensor arrivals of $n$, i.e.,
		\begin{equation*}
			\begin{aligned}
				n=k(1+S)-\theta+1, k \in \mathbb{N}^+, \theta \in [0,...,S],
				\label{col2.2}
			\end{aligned}
		\end{equation*}
		that means that different series of $n$ for $(1+S)$ difference $\theta$ will be generated. Thus, all possible transmission arrivals $n\in \mathbb{N}^+$ will be constituted and any arrival can be received by a certain $C^{\text{min}}_L\in [0,...,S]$.
	\end{proof}
	\begin{lem}
		\label{lemma3}
		When the system has a finite waiting time, the worst delay will be incurred by such sensor $T_n=C_L-S$ (where $C_L-S>0$), i.e., if $C_L$ and $[W+S]$ are relative prime, and then $T^{max}=C_L-S$.
		
	\end{lem}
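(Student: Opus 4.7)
The plan is to combine the cap on $C_L^{\text{min}}$ from Corollary \ref{cor1}, the congruence-class description from Corollary \ref{cor2}, and the explicit witness pair furnished by Lemma \ref{lemma2}, to show that the objective in Eq. \eqref{tmax}, namely $n + C_L^{\text{min}}(n)\cdot C_L$, is maximized at $n = C_L - S$. The high-level observation is that since $C_L$ multiplies $C_L^{\text{min}}$, the optimization is dominated by first pushing $C_L^{\text{min}}$ to its largest admissible value, and only then choosing the largest compatible $n \in [1, C_L]$.

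First, I would invoke Corollary \ref{cor1} to conclude that $C_L^{\text{min}}(n) \leq S$ for every feasible $n$. Next, I would split into two regimes. For any $n$ with $C_L^{\text{min}}(n) \leq S-1$, the objective is bounded above by $C_L + (S-1)C_L = S\cdot C_L$. By contrast, Lemma \ref{lemma2} supplies the pair $(n, C_L^{\text{min}}) = (C_L-S,\, S)$, whose objective equals $(C_L-S) + S\cdot C_L$. Under the hypothesis $C_L > S$, this strictly exceeds $S\cdot C_L$, which rules out every arrival with $C_L^{\text{min}} < S$ as a maximizer. It then remains to check that within the class $C_L^{\text{min}} = S$, no other $n \in [1, C_L]$ beats $C_L-S$. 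By Corollary \ref{cor2}, all such arrivals lie in a single residue class modulo $1+S$; reading off Eq. \eqref{col1.4} with the maximal residue $\theta = S$, this class is $n \equiv 1 + C_L \pmod{1+S}$. The candidate representatives in $[1, C_L]$ form the arithmetic progression $\ldots,\, C_L-S-(1+S),\, C_L-S,\, C_L-S+(1+S),\ldots$, and the first term exceeding $C_L-S$ is $1+C_L$, which lies outside $[1,C_L]$. Hence $C_L-S$ is the largest valid representative, and therefore the argmax of Eq. \eqref{tmax}.

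The main obstacle is bookkeeping rather than conceptual difficulty: one must consistently treat $C_L^{\text{min}}$ as a function of $n$, pinned down by the Bézout-type congruence from Lemma \ref{lemma1}, rather than as a free parameter, so the maximization is taken over the graph $\{(n, C_L^{\text{min}}(n))\}$ and not a product set. Once this is respected, the conclusion $T^{\text{max}} = C_L - S$ follows by juxtaposing the strict inequality $(C_L - S) + S\cdot C_L > S \cdot C_L$ with the single-class uniqueness just described, completing the proof of Lemma \ref{lemma3}.
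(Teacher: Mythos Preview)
Your argument is correct and uses the same ingredients as the paper (Corollary~\ref{cor1} for the cap $C^{\text{min}}_L \le S$, Lemma~\ref{lemma2} for the witness $(C_L-S,\,S)$, and the residue-class bijection from Corollary~\ref{cor2}), but you organize the case split by the value of $C^{\text{min}}_L$ rather than by the position of $n'$ relative to $C_L - S$ as the paper does in its contradiction proof; your decomposition is arguably cleaner, since the single bound $n + C^{\text{min}}_L\cdot C_L \le C_L + (S-1)C_L = S\,C_L$ dispatches every arrival with $C^{\text{min}}_L \le S-1$ at once, whereas the paper handles $n' < C_L-S$ and $n' > C_L-S$ separately before arriving at the same inequality. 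One small slip: the aside ``with the maximal residue $\theta = S$'' is not correct in general---for $C^{\text{min}}_L = S$ one has $\theta \equiv S\cdot C_L \equiv -C_L \pmod{1+S}$, which need not equal $S$---but your stated residue class $n \equiv C_L + 1 \pmod{1+S}$ is nonetheless right (it follows directly from Eq.~\eqref{col1.4} with $C^{\text{min}}_L = S$, or simply because Lemma~\ref{lemma2} already places $C_L - S$ in that class), so the conclusion stands.
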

	\begin{proof}[\bfseries{Proof}]
		According to \emph{Corollary} \ref{cor1}, we know that for all $n=1,2,…,C_L$, their corresponding $C^{\text{min}}_L$ will not exceed $S$, i.e., $0\leq C^{\text{min}}_L\leq S$. In the following, we prove this lemma by contradiction. Assume it exists a slot $n'\neq C_L-S$, which insures a delay worse than $n=C_L-S$. We consider this in two cases: one is $n'<C_L-S$ and the other is $n'>C_L-S$ as follows.
		
		If $n'<C_L-S$, assume it has the largest value $C^{\text{min}'}_L=S$ to incur the maximal delay with the maximal $n'=C_L-S-1$ (because $n'<C_L-S$). From Eq. (1), we have
		\begin{equation*}
			\begin{aligned}
				n'+C^{\text{min}'}_L \cdot C_L&=(C_L-S-1)+SC_L\\
				&=C_L+SC_L-S-1\\
				&<C_L+SC_L-S\\
				&=(C_L-S)+SC_L\\
				&=n+C^{\text{min}}_L \cdot C_L,
				\label{1.3_1}
			\end{aligned}
		\end{equation*}
		which means that it is less than the delay incurred by $n=C_L-S$ with $C^{\text{min}}_L=S$ (by applying the results of \emph{Lemma} \ref{lemma2}). This is contradicted to the assumption.	
		
		For $n'>C_L-S$, we first combine it with $n=C_L-S$ and represent them as the continuous integers $n'=C_L-\delta$ (where $\delta=S,S-1,..,0$). According to Eq. \eqref{col1.2}, since $n'$ are $(1+S)$ continuous integers, it will have $(1+S)$ different remainders when $n'$ is divided by $(1+S)$ and thus results in $(1+S)$ different congruences $\theta \in [0,1,..,S]$. This means that each $n'$ has an individual $C^{\text{min}'}_L \in [0,1,..,S]$. Since $n=C_L-S$ has been proved in \emph{Lemma} \ref{lemma2} that it is with $C^{\text{min}}_L=S$, it means that the other $n'=C_L-\delta$ (where $\delta=S-1,..,0$, except for $S$) has the largest value $C^{\text{min}'}_L=S-1<S$.
		
		Therefore, if we want to find another worst delay from Eq. \eqref{tmax}, we have to use the maximum $n'=C_L$ with the maximum possible value $C^{\text{min}'}_L=S-1$. Then, the delay will be
		\begin{equation*}
			\begin{aligned}
				n'+C^{\text{min}'}_L \cdot C_L&=C_L+(S-1)C_L\\
				&=SC_L\\
				&<SC_L+C_L-S\\
				&=(C_L-S)+SC_L\\
				&=n+C^{\text{min}}_L \cdot C_L,
			\end{aligned}
		\end{equation*}
		which means that it is less than the delay incurred by $n=C_L-S$ with $C^{\text{min}}_L=S$ (by applying the results of \emph{Lemma} \ref{lemma2}). This is contradicted to the assumption. Thus, the worst delay will be incurred by $n=C_L-S$. 
	\end{proof}
	\begin{lem}
		\label{lemma4}
		The system has an "infinite" waiting time to receive all the sensors if $C_L$ and $[W+S]$ are not relative prime, i.e., if $GCD(C_{L}, [W+S]) \neq 1$, it exists at least one slot $(\exists n, 1\leq n \leq C_L)$ that cannot satisfy $n+C^{\text{min}}_L\cdot C_L= W+N^{\text{min}}_{sleep}[W+S]$.
	\end{lem}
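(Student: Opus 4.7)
The plan is to prove Lemma~\ref{lemma4} by a divisibility argument that mirrors, but reverses, the logic of Lemma~\ref{lemma1}. The key observation is that the equation
\[
n + C^{\text{min}}_L \cdot C_L = W + N^{\text{min}}_{sleep}\cdot[W+S]
\]
can be rearranged (as in Eq.~\eqref{p01}) into
\[
N^{\text{min}}_{sleep}\cdot[W+S] - C^{\text{min}}_L \cdot C_L = n - W,
\]
whose left-hand side is an integer linear combination of $C_L$ and $[W+S]$. Hence any common divisor of $C_L$ and $[W+S]$ must also divide $n-W$, and this is exactly the obstruction we will exploit.

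First, I would let $d = GCD(C_L,[W+S])$ and assume $d \neq 1$, i.e.\ $d \geq 2$. Then $d \mid C_L$ and $d \mid [W+S]$, so $d$ divides the left-hand side above for every choice of integers $N^{\text{min}}_{sleep}$ and $C^{\text{min}}_L$. Consequently, a feasible pair exists for a given $n$ only if $d \mid (n-W)$, i.e.\ only if $n \equiv W \pmod{d}$.

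Next, I would exhibit at least one $n \in \{1,2,\dots,C_L\}$ violating this congruence. Since $d \mid C_L$, the residues of $1,2,\dots,C_L$ modulo $d$ cycle through all of $\{0,1,\dots,d-1\}$ uniformly, each residue appearing exactly $C_L/d$ times. In particular, only $C_L/d$ of the slots satisfy $n \equiv W \pmod d$, and because $d \geq 2$ we have $C_L - C_L/d \geq C_L/2 \geq 1$. Thus there exists at least one (in fact many) $n$ with $n \not\equiv W \pmod d$, for which the equation has no solution in $\mathbb{N}$. Concretely, one can pick $n = W+1$ (reduced into the range $[1,C_L]$ if needed), which gives $n - W = 1$, clearly not divisible by $d \geq 2$.

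Finally, I would conclude that for any such $n$, no finite $C^{\text{min}}_L, N^{\text{min}}_{sleep} \in \mathbb{N}$ can bring the transmission arriving at the $n^{th}$ slot into a wakeup slot, so the system requires an \emph{infinite} waiting time for that sensor, proving the lemma. The only subtle point — and the one place where care is needed — is the range of $n$: we must show that the forbidden residue class actually intersects $\{1,\dots,C_L\}$, and this is where the relation $d \mid C_L$ ensures uniform coverage of residues and guarantees at least one offending slot. Everything else is a direct consequence of elementary divisibility.
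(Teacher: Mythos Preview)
Your proposal is correct and follows essentially the same approach as the paper: both rearrange the equation into $N^{\text{min}}_{sleep}[W+S]-C^{\text{min}}_L C_L = n-W$, observe that any common divisor $d>1$ of $C_L$ and $[W+S]$ must divide $n-W$, and then exhibit $n=W+1$ as an explicit slot for which $d\nmid 1$. The residue-counting paragraph you add is extra detail not present in the paper, but the core argument and the concrete counterexample are identical.
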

	\begin{proof}[\bfseries{Proof}]
		Since $GCD(C_{L}, [W+S]) \neq 1$, it implies the existence of a greatest common factor $k>1(k \in \mathbb{Z}^+)$ that satisfies  $k|C_L$ and $k|[W+S]$. Then, we rewrite Eq. \eqref{tmax} as follows:
		
		\begin{equation}
			\begin{aligned}
				n-W= N^{\text{min}}_{sleep}[W+S]-C^{\text{min}}_L \cdot C_L.
				\label{p10}
			\end{aligned}
		\end{equation}
		In Eq. \eqref{p10}, the right hand side has a factor of $k$ due to $k|C_L$ and $k|[W+S]$. Thus, the left hand side of Eq. \eqref{p10} should have a factor of $k$ as well according to the prime factorization, i.e.,
		\begin{equation*}
			\begin{aligned}
				k \mid {n-W},
				\label{p11}
			\end{aligned}
		\end{equation*}
		and thus $\frac{n-W}{k} \in \mathbb{Z}$. However, since $n$ ranges from $1$ to $C_L$, these exists at least one value does not satisfy Eq. \eqref{tmax}. For example, $n=W+1$ cannot satisfy because $k \nmid 1$ for any $k>1$. Thus, it cannot ensure all possible transmissions to be received and incurs an “infinite” waiting time.
	\end{proof}
	
	Now, we investigate the worst transmission delay for such sensor $T^{max}$, denoted as $W_{delay}$, and then extend this analysis to evaluate the expected total worst transmission delay, denoted as $E[W_{delay}^{total}]$, considering the conditional probability $Pr(c)$.
	\begin{lem}
		\label{lemma5}
		The worst transmission delay $W_{delay}$ incurring by the wakeup-sleep operation can be obtained by $W_{delay}=C_L(S+1)-S$.
	\end{lem}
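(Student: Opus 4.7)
The plan is to derive $W_{delay}$ directly by substituting into the expression set up in Eq.~\eqref{tmax} the worst-case arrival slot and the worst-case duty-cycle count that have already been characterized in the preceding results. The definition in Eq.~\eqref{tmax} says that when a sensor first arrives at slot $n$ and is successfully received after $C^{\text{min}}_L$ additional duty-cycles, the reception happens at slot $n + C^{\text{min}}_L \cdot C_L$, so to obtain $W_{delay}$ I need only plug in the worst-case pair $(n, C^{\text{min}}_L)$.

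First, I would invoke \emph{Theorem}~\ref{th001} (equivalently \emph{Lemma}~\ref{lemma3}) to identify the worst-case arrival slot as $n = T^{max} = C_L - S$, under the standing hypothesis that $C_L$ and $[W+S]$ are relatively prime so that a finite waiting time exists at all. Next, I would apply \emph{Lemma}~\ref{lemma2}, which pins down the matching minimum number of sensor duty-cycles for this particular $n$ as $C^{\text{min}}_L = S$, recalling that $W = 1$ is fixed throughout the scheme.

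The final step is a routine algebraic simplification: substituting these values into the reception-time expression gives
\begin{equation*}
W_{delay} = n + C^{\text{min}}_L \cdot C_L = (C_L - S) + S \cdot C_L = C_L + S\,C_L - S = C_L(S+1) - S,
\end{equation*}
which is exactly the claimed formula. No additional case analysis is needed, since \emph{Lemma}~\ref{lemma3} already excludes the possibility that some other $n' \neq C_L - S$ could produce a strictly larger delay.

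The main (modest) obstacle is bookkeeping rather than mathematics: one must be careful to use the \emph{jointly} worst pair $(n, C^{\text{min}}_L)$ that maximizes the sum $n + C^{\text{min}}_L \cdot C_L$ rather than maximizing each factor in isolation, and to keep the convention $W=1$ consistent with the scheme so that the expression collapses cleanly. Once these conventions are fixed, the lemma follows by direct substitution from \emph{Lemma}~\ref{lemma2} and \emph{Lemma}~\ref{lemma3}.
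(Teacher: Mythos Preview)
Your proposal is correct and essentially mirrors the paper's own proof: the paper likewise invokes \emph{Lemma}~\ref{lemma2} to obtain $n'=C_L-S$ and $C^{\text{min}}_L=S$, then substitutes into Eq.~\eqref{tmax} with $W=1$ and performs the same algebraic simplification to reach $W_{delay}=C_L(S+1)-S$. Your additional appeal to \emph{Lemma}~\ref{lemma3} to rule out other arrival slots is a minor expository refinement but does not change the argument.
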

	\begin{proof}[\bfseries{Proof}]
		According to \emph{Lemma} \ref{lemma2}, we have the worst sensor transmission $T^{max}=n'=C_L-S$ with $C^{\text{min}}_L=S$. Since $W=1$, Eq. \eqref{tmax} can be written as follows:
		\begin{equation*}
			\begin{aligned}
				W_{delay}&=n'+C^{\text{min}}_L \cdot C_L\\
				&=(C_L-S)+S \cdot C_L\\
				&=S \cdot C_L+C_L-S\\
				&=C_L(S+1)-S.
			\end{aligned}
		\end{equation*}
	\end{proof}
	
	\begin{theorem} For any sensor transmission $T_n$, the system has a total expected worst transmission delay $E[W_{delay}^{total}]$ with $t$ times collision, where
		\begin{equation}\begin{aligned} 
				E[W_{delay}^{total}]= (C_L(S+1)-S) {\left(\frac{C_L-1}{C_L}\right)}^{(N-1)} \sum_{t=0}^{\infty} (t+1) \\ \left(1-{\left(\frac{C_L-1}{C_L}\right)}^{(N-1)}\right)^{t}.
				\label{thr02}
			\end{aligned}
		\end{equation}
		
		\label{th002}
	\end{theorem}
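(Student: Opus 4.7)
The plan is to combine the per-attempt worst-case delay from Lemma \ref{lemma5} with a geometric model of retransmissions caused by collisions. By Lemma \ref{lemma5}, whenever the wakeup--sleep operation does not lose the packet to a collision, the worst transmission delay for a single attempt is $W_{delay}=C_L(S+1)-S$. So the total delay is $(t+1)\cdot W_{delay}$ whenever the transmission has suffered $t$ collisions before being successfully received, and the theorem reduces to computing the probability distribution of $t$ and taking the expectation.

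First I would derive the per-attempt success probability $Pr(s)$. Each of the other $N-1$ sensors transmits once in its own $C_L$-slot duty cycle, and modelling its transmission slot as uniform over the $C_L$ positions, the probability that a given other sensor does not overlap with the tagged sensor's slot is $(C_L-1)/C_L$. Assuming the sensors transmit independently, the probability that \emph{none} of the $N-1$ interferers collides with the tagged sensor is
\begin{equation*}
Pr(s)=\left(\frac{C_L-1}{C_L}\right)^{N-1},\qquad Pr(c)=1-\left(\frac{C_L-1}{C_L}\right)^{N-1}.
\end{equation*}
Then the number of collisions $t$ suffered before the first successful reception is geometrically distributed, and $\Pr(t\text{ collisions then success})=Pr(s)\cdot Pr(c)^{t}$.

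Next I would assemble the expectation. Because a collided transmission is retried on the next duty cycle and each retry again carries the worst-case delay $W_{delay}$ of Lemma \ref{lemma5}, the total delay conditional on $t$ collisions is $(t+1)\cdot W_{delay}$. Summing over all $t$,
\begin{equation*}
E[W_{delay}^{total}]=\sum_{t=0}^{\infty}(t+1)\cdot W_{delay}\cdot Pr(s)\cdot Pr(c)^{t},
\end{equation*}
and substituting $W_{delay}=C_L(S+1)-S$ together with the expressions for $Pr(s)$ and $Pr(c)$ yields exactly Eq.~\eqref{thr02}.

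The main obstacle is really a modelling/justification step rather than a calculational one: I need to argue cleanly that (i) every collision forces a full additional worst-case delay of $W_{delay}$ before another chance at reception, so the per-retry delays simply stack additively, and (ii) the per-retry collision events are independent and identically distributed, so the number of collisions is genuinely geometric. Both rely on the uniform-slot and independence assumptions on the sensor transmissions; once these are explicitly stated, the remainder of the proof is a direct substitution. A minor bookkeeping point I would mention is that the series $\sum_{t=0}^{\infty}(t+1)Pr(c)^{t}=1/Pr(s)^{2}$ converges because $Pr(c)<1$ whenever $C_L\geq 2$ and $N\geq 1$, so the expected delay is finite and Eq.~\eqref{thr02} is well defined.
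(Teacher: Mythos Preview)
Your proposal is correct and follows essentially the same approach as the paper: derive $Pr(s)=\bigl(\tfrac{C_L-1}{C_L}\bigr)^{N-1}$ from independent uniform slot choices by the other $N-1$ sensors, model the number of collisions $t$ as geometric, take the total delay after $t$ collisions to be $(t+1)W_{delay}$ with $W_{delay}$ from Lemma~\ref{lemma5}, and sum $\sum_{t\geq 0}(t+1)W_{delay}\,Pr(s)\,Pr(c)^t$. Your additional remarks on the independence/uniformity assumptions and on the convergence of $\sum_{t}(t+1)Pr(c)^t$ are sound refinements that the paper leaves implicit.
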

	\begin{proof}[\bfseries{Proof}]
		Suppose we have $N$ sensors to transmit in the system and each sensor randomly chooses a time slot to transmit within a duty-cycle in the first round without pre-defined rules or synchronization, the probability of collision $Pr(c)$ for the worst sensor transmission $T^{max}$ being received at $W_{delay}$ slot is given by
		\begin{equation}
			\label{eq16}
			Pr(c)=1-Pr(s),
		\end{equation}
		where $Pr(s)$ is the successful probability without collision. To calculate $Pr(s)$, we consider the case that any other sensors choose a slot different than that one chosen by sensor $T^{max}$. Since there are $N$ sensors and $C_L$ time slots available for all transmissions, the probability that sensor $T^{max}$ arrives but the remaining $N-1$ sensors choose other time slots will be
		\begin{equation*}
			\begin{aligned}
				Pr(s)&={\left(\frac{C_L-1}{C_L}\right)}^{(N-1)} \\& =1-Pr(c).
			\end{aligned}
		\end{equation*}
		Thus, if sensor $T^{max}$ is not received at time slot $W_{delay}$ and poses $t$ times of collisions before being received successfully, it will incur a total delay consisting of the sum of $t$ previous collision delays and the last successful reception delay. Mathematically, this total delay is given by
		\begin{equation*}
			\begin{aligned}
				W_{delay}^t=t\times W_{delay}+W_{delay}=(t+1)\times W_{delay}.
			\end{aligned}
		\end{equation*}
		The probability of this total delay\footnote{
			Note that if the system does not receive a sensor over $T_{delay}$ slots, the system will re-activate such sensor to randomly transmit again to avoid such sensor being collided by the same pattern.}, denoted as $Pr(W_{delay}^t)$, is determined by the probability of encountering a collision $Pr(c)$, raised to the power of $t$, multiplied by the probability of successful reception $Pr(s)$, and it is with a probability given by
		\begin{equation*}
			\begin{aligned}
				Pr(W_{delay}^{t})=Pr(c)^t \times Pr(s).
			\end{aligned}
		\end{equation*}
		Therefore, using this result and \emph{Lemma} \ref{lemma5}, we can derive the total expected transmission delay, denoted as $E[W_{delay}^{total}]$, as follows:
		\begin{equation*}\begin{aligned}
				E[W_{delay}^{total}]&= \sum_{t=0}^{\infty} W_{delay}^{t} \cdot Pr(W_{delay}^{t})\\
				&=\sum_{t=0}^{\infty}(t+1)\times W_{delay} \cdot Pr(c)^t \times Pr(s) \\
				&=W_{delay}Pr(s)\sum_{t=0}^{\infty}(t+1) \cdot ({(1-Pr(s))}^t)\\
				&=(C_L(S+1)-S) {\left(\frac{C_L-1}{C_L}\right)}^{(N-1)} \\ & \quad \quad \quad \sum_{t=0}^{\infty} (t+1) \left(1-{\left(\frac{C_L-1}{C_L}\right)}^{(N-1)}\right)^{t}.
			\end{aligned}
		\end{equation*}
	\end{proof}
	In conclusion, \emph{\bf{Theorem \ref{th001}}} and \emph{\bf{Lemmas \ref{lemma1} $\sim$ \ref{lemma4}}} are used to determine that the system has a finite delay or not, while \emph{\bf{Theorem \ref{th002}}} and \emph{\bf{Lemmas \ref{lemma5}}} are used to calculate the total expected worst transmission delay $E[W_{delay}^{total}]$.
	\begin{figure} [t]
		\centering
		\includegraphics[width=\linewidth]{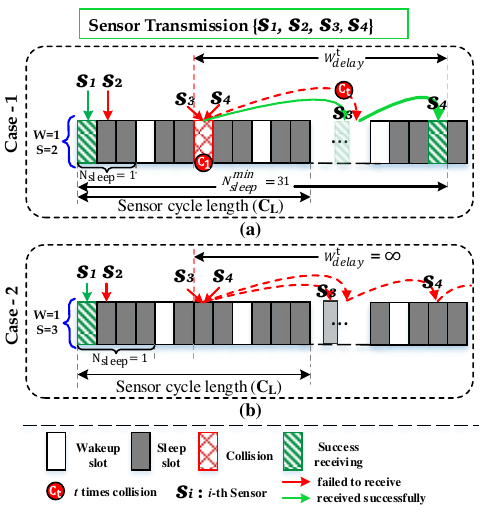}
		\caption{Example of two transmission scenarios}
		\label{Scheme}
	\end{figure}
	
	Below, we give two examples in Fig.~\ref{Scheme} to clarify the analytic results. In the Fig.~\ref{Scheme}(a), considering $GCD(C_{L}, [W+S]) = 1$, we assume that $C_{L}=32$ and $[W+S]=[1+2] = 3$. Based on the \emph{analytic scheme}, because $GCD(C_L, [W+S]) = 1$, we have $T^{\text{max}} = C_L - S = 32 - 2 = 30$. This implies that the sensor arriving at the $30^{\text{th}}$ slot experiences the longest transmission delay of $W_{delay}=C_L(S+1)-S=32\times(1+2)-2=94$ with $C^{\text{min}}_L=S=2$ and $N^{\text{min}}_{sleep} =C_L-1=31$.

	On the other hand, in Fig.~\ref{Scheme}(b), considering 
	$[W+S] = [1+3]=4$. Since $GCD(C_{L}, [W+S]) \neq 1$,
	the worst transmission $T^{\text{max}}$ does not exist. Therefore, it has an infinite waiting time to receive all sensors in such system.
	\section{Performance Evaluation} 
	In this section, we conduct both the field trial experiments and mathematical analysis to validate the effectiveness of the system. The performance evaluations, collectively referred to as \textbf{Mathematical Analysis} and \textbf{Real-time Experiment} with different parameters of $S=2,6,30$ where $GCD(C_{L},[W+S])=1$ and $W$ is fixed as $1$, to demonstrate the efficiency and effectiveness of our proposed system. It is important to note that there has been no previous work specifically addressing TP sensor with sleep operation and transmission delay, making our study the first to comprehensively investigate this critical aspect of TP sensor performance.
	
	To simulate a realistic range of TP sensor counts for passenger vehicles and heavy-weight logistic trucks, trailers, and flatbeds, we considered $N=4 \sim 32$ sensors in our simulations. This range encompasses the typical number of sensors found on passenger vehicles ($4$ to $8$ wheels) \cite{wheel} and heavy-weight logistic trucks, trailers, and flatbeds ($8$ to $16$ wheels) \cite{wheel}. Note that the experiments will demonstrate the performance results even if the system is overloaded (i.e., $N \approx C_L$, where $C_L=32$ \cite{r_sensor}).
	
	\subsection{Worst Transmission Delay}
	Fig.~\ref{wd} shows the worst transmission delay (in seconds) experienced by the system for different numbers of sensors transmitting. As the number of transmitting sensors increases, the delay also increases due to frequent collisions of sensor transmissions. Note that all the pair results of mathematical and experiment analysis are consistent.  
	Specifically, \textbf{Mathematical Analysis [S=2]} shows the lowest transmission delay due to a higher number of total wakeup slots per cycle length, allowing more data to be received in the system.
	\begin{figure} [!h]
		\centering
		\includegraphics[width=\linewidth]{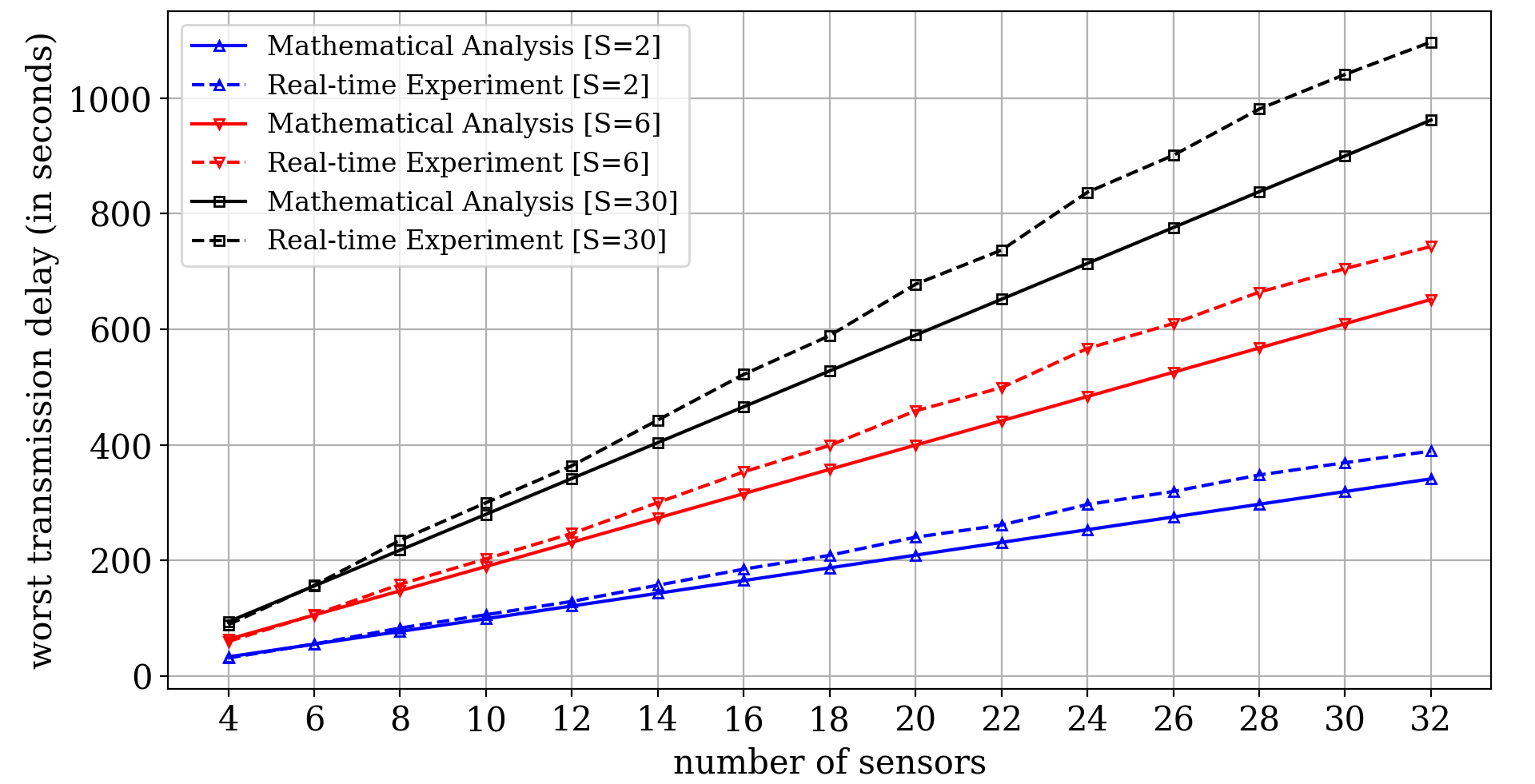}
		\caption{Comparisons on worst transmission delay}
		\label{wd}
	\end{figure}
	\subsection{Expected Worst Transmission Delay} Fig.~\ref{ed} shows the expected worst transmission delay (in seconds) received in the system for different numbers of sensors transmitting, where $E[W_{delay}^{total}]=\sum_{t=0}^{\infty}(t+1)\times W_{delay} \cdot Pr(c)^t \times Pr(s)$. Based on different numbers of transmitting sensors in the system, it is shown that the expected delay of \textbf{Real-time Experiment} is always higher than \textbf{Mathematical Analysis}. This is due to a higher collision probability $(Pr(c))$ with an increasing number of transmitting sensors, resulting in a higher expected worst transmission delay. 
	\begin{figure} [!h]
		\centering
		\includegraphics[width=\linewidth]{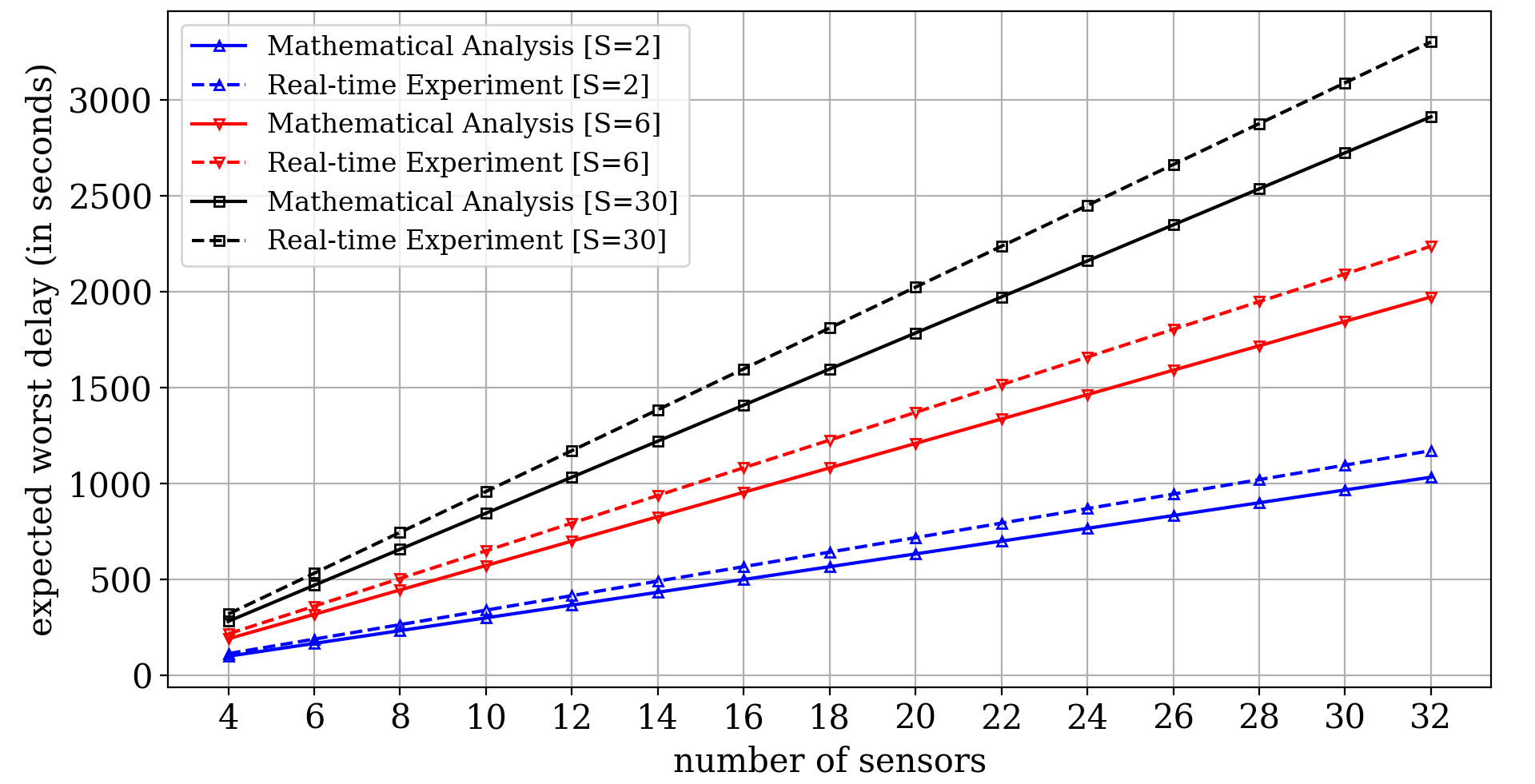}
		\caption{Comparisons on expected worst transmission delay}
		\label{ed}
	\end{figure}
	\subsection{Average Transmission Delay}Fig.~\ref{ad} shows the average transmission delay (in seconds) received in the system for different numbers of sensors transmitting. \textbf{Real-time Experiment} results show the worst average results as they're based on real-time data with higher collision probabilities, which causes a higher average transmission delay when considering the higher number of sensors transmitting per $C_{L}$. \textbf{Mathematical Analysis [S=2]} results are the lowest and most efficient due to the maximum number of total wakeup slots per cycle length among all other compared schemes.
	
	From Fig. \ref{wd} to Fig. \ref{ad}, we can observe that for vehicles with fewer tires (e.g., $4$ to $8$ wheels), a shorter sleep cycle length (e.g., $S = 2$) can facilitate the reception of all sensor data. For those special large vehicles such as trucks, trailers, and flatbeds with a larger number of wheels (e.g., $\ge 12$ wheels), the individual receivers will be recommended for the detached ones to facilitate the data reception.
	\begin{figure} [!h]
		\centering
		\includegraphics[width=\linewidth]{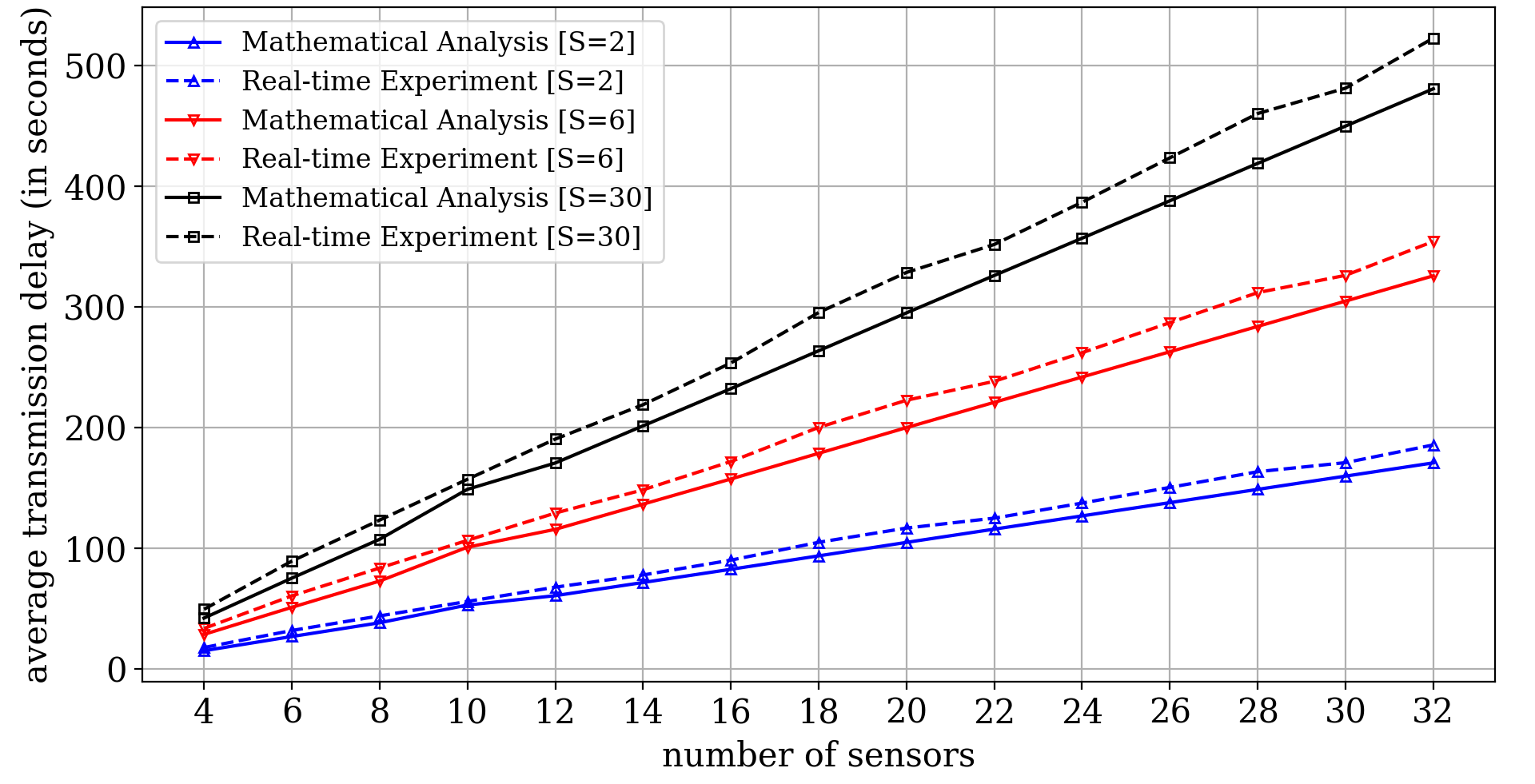}
		\caption{Comparisons on average worst transmission delay}
		\label{ad}
	\end{figure}
	\subsection{Power Saving Ratio}The power saving ratio is calculated based on various numbers of transmitting sensors, where ${\textit{power saving ratio (\%)}= \frac { \textit{total sleep period (S)}}{\textit{total wakeup-sleep period (W+S)}}\times 100}$. As depicted in Fig.~\ref{pw}, all the schemes exhibit power-saving ratios ranging from 66.7\% to 96.7\% for S=2 to S=30, respectively. Even as the number of sensors increases, the power-saving ratio remains constant for each specific value of S due to the static nature of wake-up slots in the system. 
	\textbf{Real-time Experiment [S=2]} demonstrates the lowest power saving ratio among all the schemes, primarily due to the fewest number of sleep slots.
	
	\begin{figure} [!h]
		\centering
		\includegraphics[width=\linewidth]{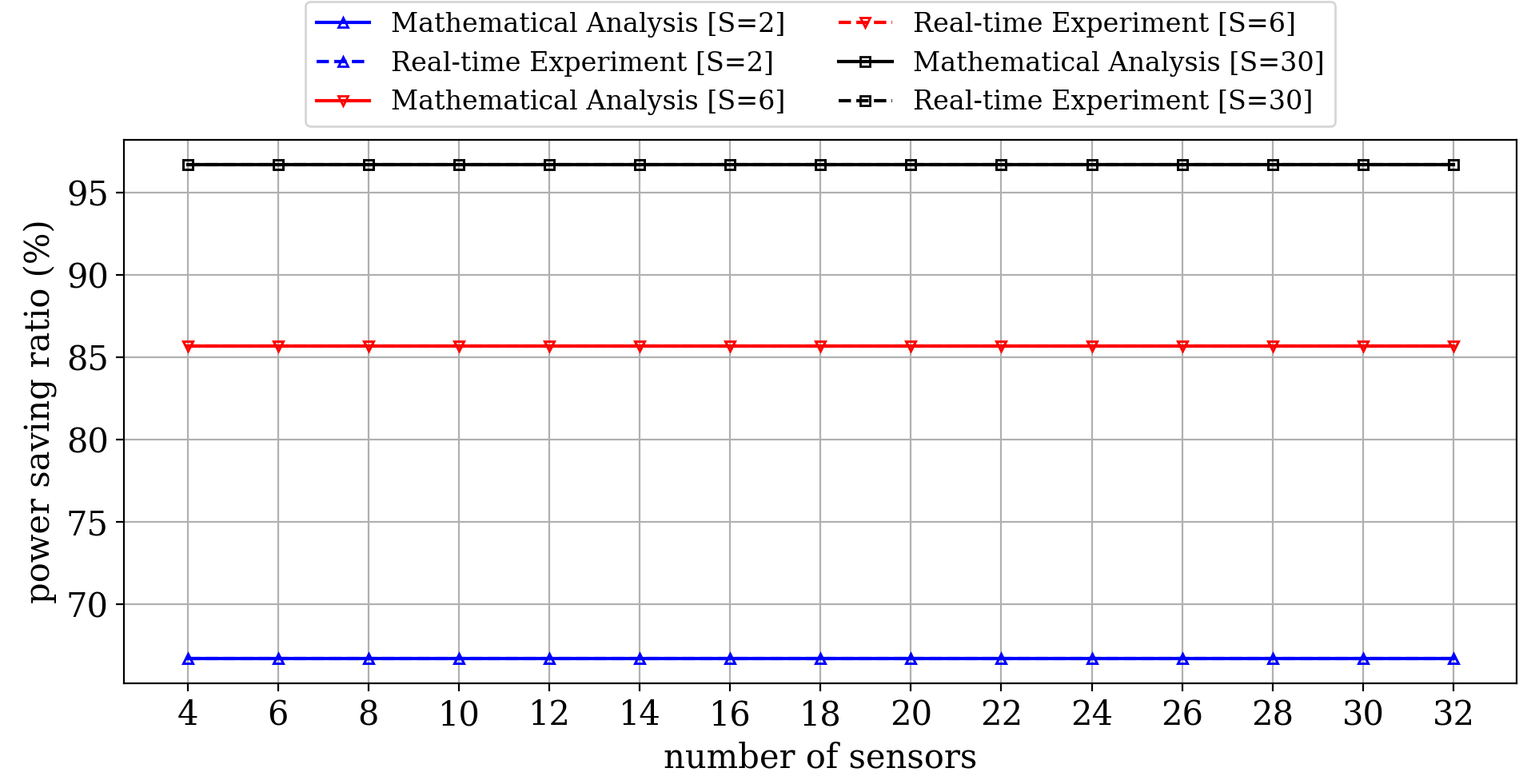}
		\caption{Comparisons on power saving ratio}
		\label{pw}
	\end{figure}
	
	\begin{table*}[!htbp]
		\caption{Comparison and Analysis with Other Applications}
		\label{applications}
		\begin{tabularx}{\textwidth}{@{} l *{9}{C} c @{}}
			\toprule
			\bf{Applications} & \bf{Sensors} & \bf{Wireless Channel} & \textbf{Trans. Interval \quad\quad\quad(in ms)}  
			& \bf{Delay Requirements (in ms)} & \multicolumn{2}{c}{\bf{Energy Consumption}} & \multicolumn{2}{c}{\bf{*Suggested Parameters}}   \\
			\cmidrule(lr){6-7}
			\cmidrule(lr){8-9}
			& & & & & \textbf{Energy Issue?} & \textbf{Applicable or not?} & \textbf{$C_L$} & \textbf{$S$} \\
			\midrule
			ABS \cite{FN3} & Wheel inertial & 5 GHz & 100$\sim$200 
			& 150 & Yes & Yes & 150 & 10 \\ \hline
			
			ESC \cite{t5} & Wheel inertial & 5 GHz & 100$\sim$300 
			& 200 & Yes & Yes & 200 & 12 \\ \hline
			
			PAS \cite{t4} & Ultrasonic & 433 MHz & 200$\sim$500 
			& 350 & Yes & Yes & 350 & 22   \\ \hline
			
			OSS \cite{MC4_1} & PIR & 5 GHz & 100$\sim$300 
			& 200 & Yes & Yes & 200 & 12 \\ \hline
			
			LDW \cite{R3} & Radar, Infrared & 2.4 GHz & 100$\sim$500 
			& 300 & Yes & Yes & 300 & 20  \\ 
			
			\bottomrule
		\end{tabularx}
		\begin{tablenotes}
			\item[] \textbf{* }The parameters are based on proposed analytic scheme.
			\item[]
			\textbf{ABS:} Anti-lock Braking System;
			\textbf{ESC:} Electronic Stability Control; 
			\textbf{PAS:} Parking Assist Systems; 
			\textbf{OSS:} Occupancy Sensor System; 
			\textbf{LDW:} Lane Departure Warning.
		\end{tablenotes}
	\end{table*}
	
	\subsection{Cumulative Correctness Rate (CCR)} Finally, the correctness rate, the ratio of the theoretical values derived from the \textbf{Mathematical Analysis} (Fig.~\ref{cr}) to the corresponding \textbf{Real-time Experiment} values for \emph{Worst Transmission Delay} and \emph{Average Transmission Delay}. CCR is defined as:
	\begin{equation*}
		\begin{aligned}
			\textit{CCR}=\frac{\overline{\textit{MA}}\times 100} {\overline{\textit{RE}}}
		\end{aligned}
	\end{equation*}
	where $\overline{\textit{MA}}$ and $\overline{\textit{RE}}$ represent the average values of the three aforementioned metrics in the \textbf{Mathematical Analysis} and the \textbf{Real-time Experiment}, respectively. 
	The \textbf{Mathematical Analysis} achieved a significantly higher CCR (over $97.9\%$), demonstrating the analytic scheme's consistency in replicating real-world behavior and its overall reliability in delivering accurate values.
	\begin{figure} [!h]
		\centering
		\includegraphics[width=\linewidth]{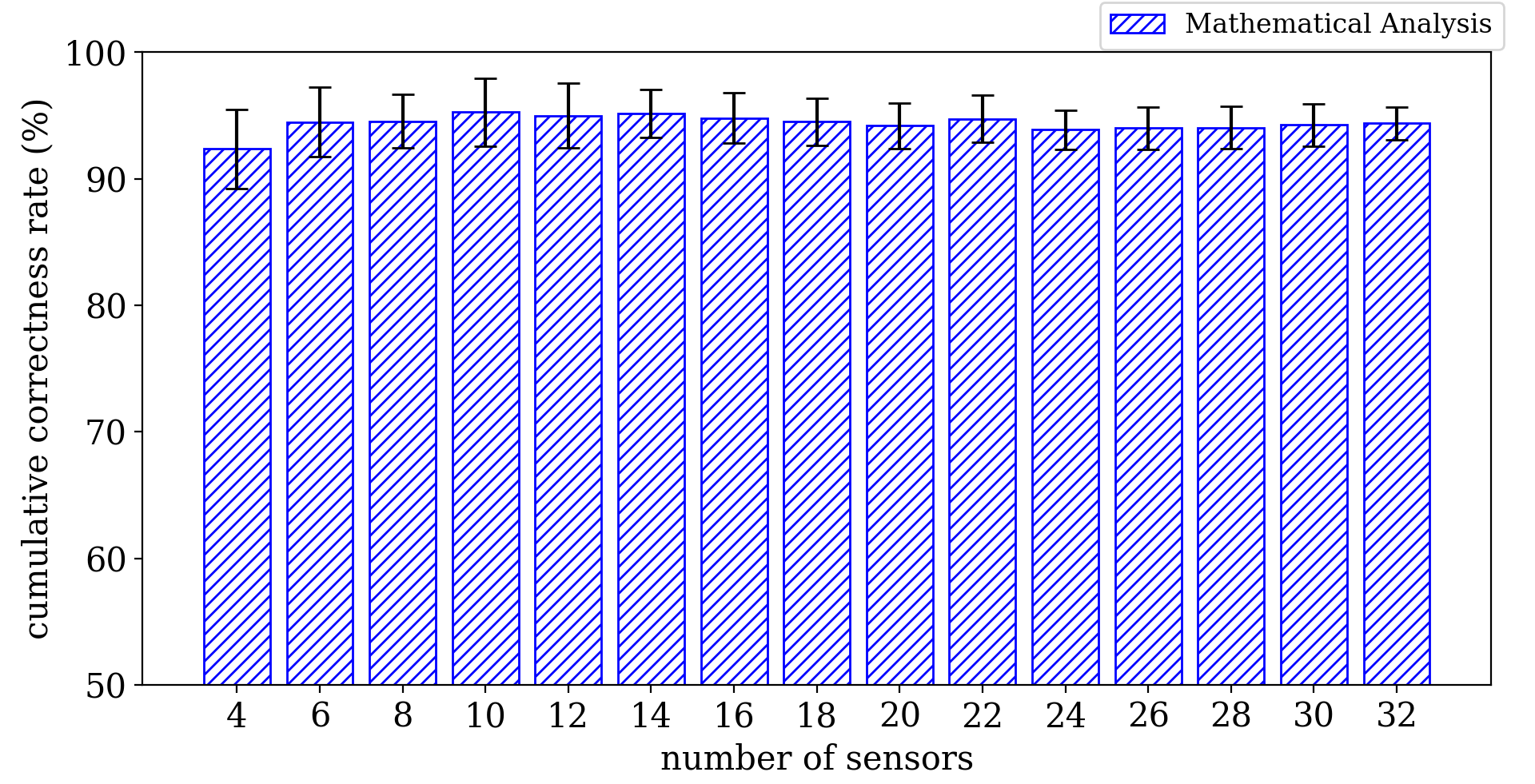}
		\caption{Correctness rate of the proposed scheme}
		\label{cr}
	\end{figure}
	
	\subsection{Discussion of Comparison and Analysis with Other Applications}
	Table \ref{applications} illustrates the potential applicability of the proposed system in various wireless sensor-based applications within intelligent vehicles, including the Anti-lock Braking System (ABS) \cite{FN3}, Electronic Stability Control (ESC) \cite{t5}, Parking Assist Systems (PAS) \cite{t4}, Occupancy Sensor System (OSS) \cite{MC4_1}, and Lane Departure Warning (LDW) \cite{R3}. We highlight a common challenge of high energy consumption resulting from frequent data transmissions (ranging from $100$ to $500$ slots, with each slot lasting milliseconds) across all applications. A key advantage of our proposed system lies in its adaptability to address such challenges.
	If adopting our approach on these applications, we suggest setting the parameters of the proposed scheme for ABS, ESC, PAS, OSS, and LDW at $150$, $200$, $350$, $200$, and $300$ slots, respectively,  with sleep lengths set as $10$, $12$, $22$, $12$, and $20$ slots, respectively, to potentially balance transmission delays and energy consumption, both critical factors safety-critical systems. By employing our analytic scheme to analyze data and determine delays, it has the potential to mitigate the issue of high energy consumption while ensuring the seamless operation of applications.
	
	\section{System Demonstration}
	Intelligent vehicles rely on the IoT-enabled WTSS, a critical component that utilizes various IoT and wireless technologies to facilitate effective and efficient real-time monitoring. This section delves into its applications and demonstrates its deployment across diverse IoT and wireless technologies to achieve real-time monitoring capabilities. Note that Table \ref{tab1} summarizes a comparative analysis of existing wireless tire sensing systems. We can say that our system has more features and more comprehensive functions than traditional systems.
	\begin{table*}
		\caption{Comparison of existing systems}
		\label{tab1}
		\begin{tabularx}{\textwidth}{@{} l *{10}{C} c @{}}
			\toprule
			\bf{Ref.\& Year} & \bf{Wireless technology} & \bf{Network type} & \bf{Power consumption} & \bf{Wireless range} & \bf{Live geo-location} & \bf{Storage  (local/cloud)} & \bf{Energy conservation} & \bf{Cost} & \bf{Delay analysis} \\ 
			\midrule
			2019 \cite{t1} & Bluetooth & WPAN & Low & Short & No & Local & No & Low & No  \\ \hline
			2019 \cite{t2} & Wi-Fi & WLAN & High & Medium & No & Cloud & No & High & No  \\ \hline
			2020 \cite{t3} & LoRaWAN & LoRaWAN & High & Short & Yes & Cloud & No & High & No  \\ \hline
			2021 \cite{t4} & Cellular & WWAN & High & Long & Yes & Cloud & No & Medium & No  \\ \hline
			2021 \cite{c11} & - & - & High & - & No & - & No & Medium & No  \\ \hline
			2022 \cite{c18} & Bluetooth & WPAN & Low & Short & No & Local & No & Medium & No  \\ \hline
			2022 \cite{t5} & Zigbee & WPAN & Low & Short & No & Cloud & No & Medium & No  \\ \hline
			\textbf{\tc{Proposed System}} & \textbf{\tc{Bluetooth, Wi-Fi \& Cellular}} & \textbf{\tc{WPAN, WLAN \& WWAN}} & \textbf{\tc{Low}} & \textbf{\tc{Long}} & \textbf{\tc{Yes}} & \textbf{\tc{Local \& Cloud}} & \textbf{\tc{Yes}} & \textbf{\tc{Medium}} & \textbf{\tc{Yes}}  \\
			\bottomrule
		\end{tabularx}
	\end{table*}
	
	\begin{figure} [t!]
		\centering
		\includegraphics[width=\linewidth]{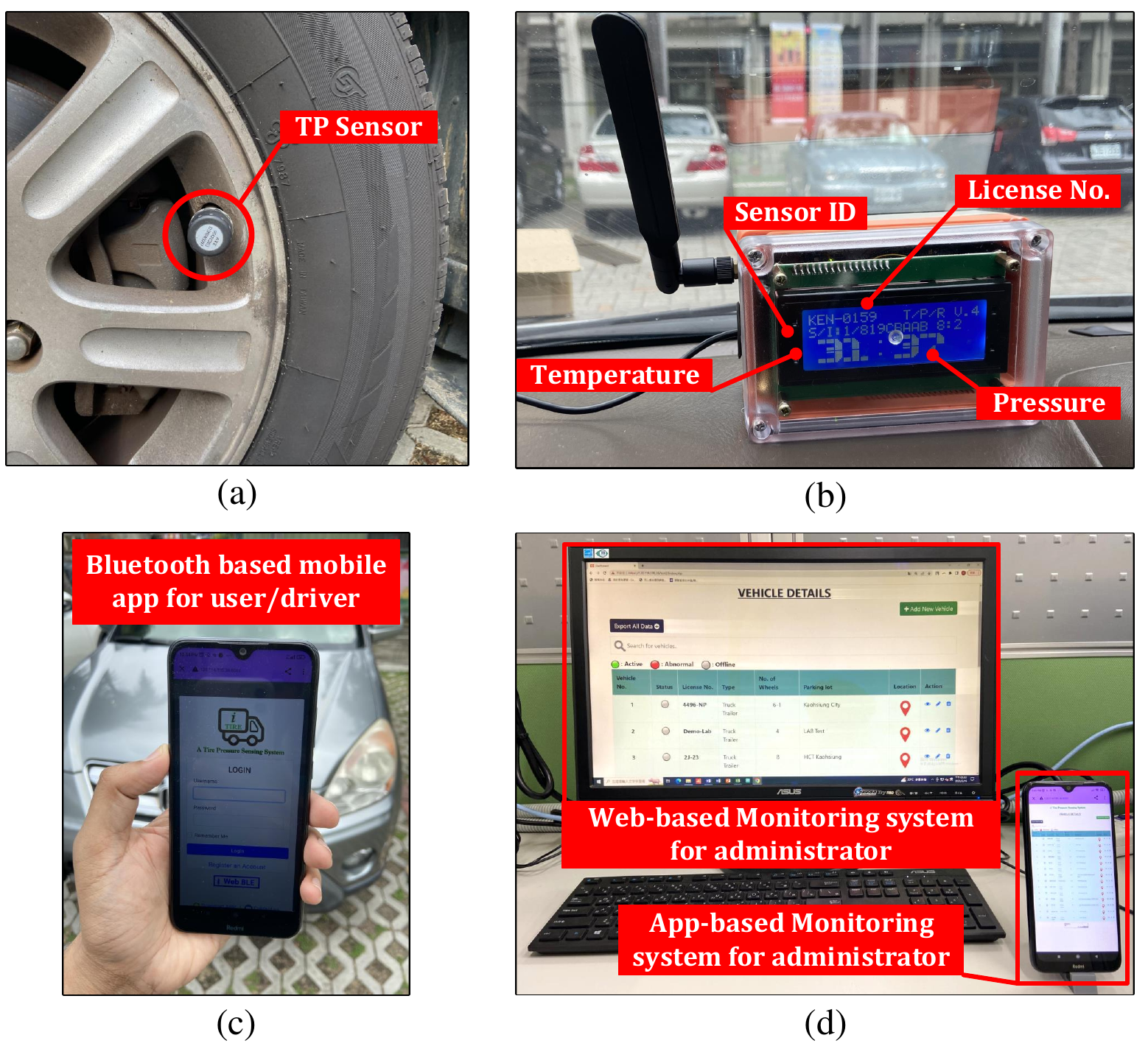}
		\caption{Working scenario of proposed IoT-enabled WTSS: (a) wireless TP sensor installed on the tire of a vehicle, (b) TP Gateway installed in the vehicle, (c) bluetooth based mobile app for user/driver and (d) web and mobile app based monitoring system for administrators}
		\label{d_1}
	\end{figure}
	
	\subsection{Overview and Working Scenario of IoT-Enabled WTSS} The proposed IoT-enabled WTSS system can manage real-time data for intelligent vehicles, making it suitable for organizations with small or large parking spaces. This includes residential neighborhoods, local parks, universities, airports, stadiums, business parks, and large automobile industries.
	Fig.~\ref{d_1} illustrates the IoT-enabled WTSS system, which utilizes multiple wireless technologies for various purposes, including 1) Bluetooth for mobile app connectivity, 2) Wi-Fi for industrial parking areas, and 3) 4G cellular for long-distance communication and real-time vehicle location tracking. The system captures and processes tire pressure and temperature data from wireless TP sensors installed on each tire of the intelligent vehicle, which is transmitted wirelessly to the TP Gateway for decoding and storage in the proposed cloud-based management system's database. The system's GUI dashboard provides real-time and historical data analysis, including vehicle type, parking lot, sensor positions, live geo-location, and other insights.
	\begin{figure} [!h]
		\centering
		\includegraphics[width=\linewidth]{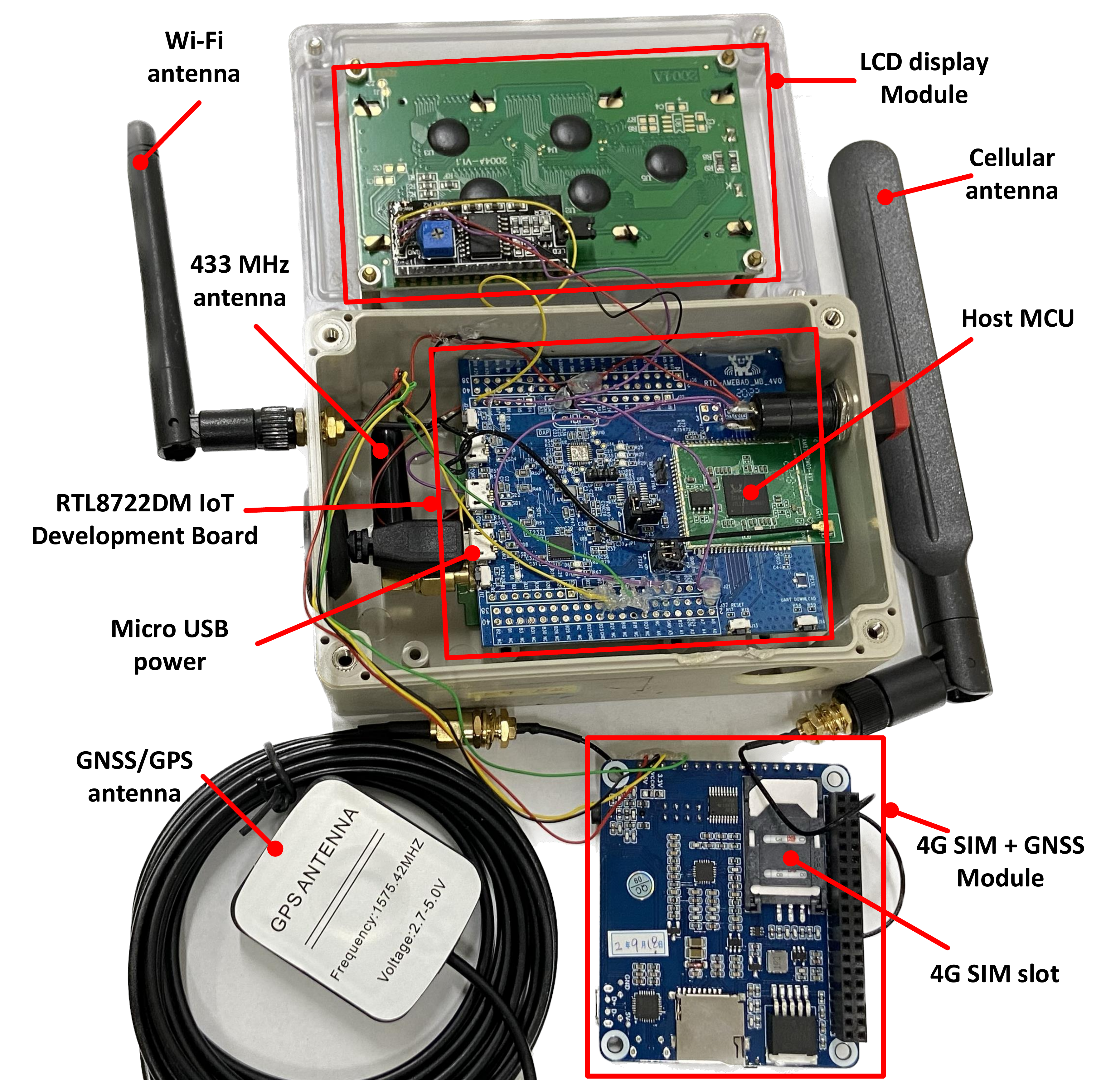}
		\caption{Various parts of TP Gateway}
		\label{d_2}
	\end{figure}
	
	\subsection{IoT-Enabled Modules and Components}The prototype of our system has been developed and verified. Fig.~\ref{d_2} shows the various parts of the proposed TP Gateway. The TP Gateway is the data processing source of our proposed IoT-enabled WTSS system. To gather tire pressure, temperature, and current status of sensors, we have integrated the device with multiple modules, including RTL8722DM development board, I2C LCD, 4G SIM module, host microcontroller, 433 MHz antenna, Wi-Fi antenna, cellular antenna, and GNSS/GPS antenna.
	\begin{figure} [!h]
		\centering
		\includegraphics[width=\linewidth]{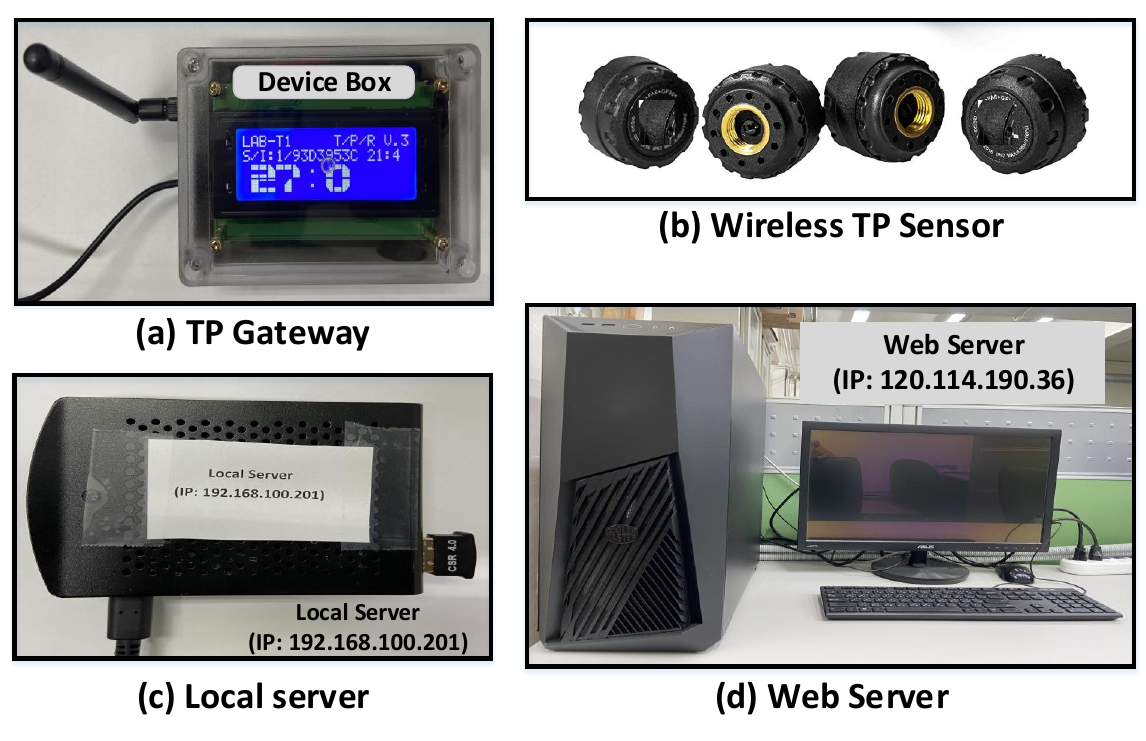}
		\caption{The hardware components: (a) TP Gateway, (b) wireless TP sensor, (c) raspberrypi as a local server, and (d) web server}
		\label{d_3}
	\end{figure}
	
	The portable device, shown in Fig.~\ref{d_3}(a), has a box shape and features various wireless technologies that allow it to communicate with multiple devices simultaneously. It also includes a push button for activating the I2C LCD display. Fig.~\ref{d_3}(b) illustrates 3V lithium battery-powered wireless TP sensors that gather tire pressure and temperature data with high accuracy. The module collects and transmits data through the radio frequency transmitter to the 433 MHz antenna. A local server based on Raspberry Pi4 and a web server based on Apache HTTP v3.3.0 is installed on our surveillance server. Fig.~\ref{d_3}(c) and (d) depict the hardware configuration of the local and web servers, respectively.
	
	The TP Gateway is an IoT-enabled real-time system designed to collect and process tire pressure and temperature data from wireless TP sensors installed on the tires of intelligent vehicles. The gateway is an important technology for the automotive industry as it enables real-time data processing, which leads to improved safety, better fuel efficiency, and longer tire life. By employing multiple wireless technologies for data transmission, it can be highly efficient and reliable. Overall, the TP Gateway can help prevent accidents and reduce maintenance costs, making it a valuable asset for fleet management and individual vehicle owners alike.
	\begin{figure} [!h]
		\centering
		\includegraphics[width=\linewidth]{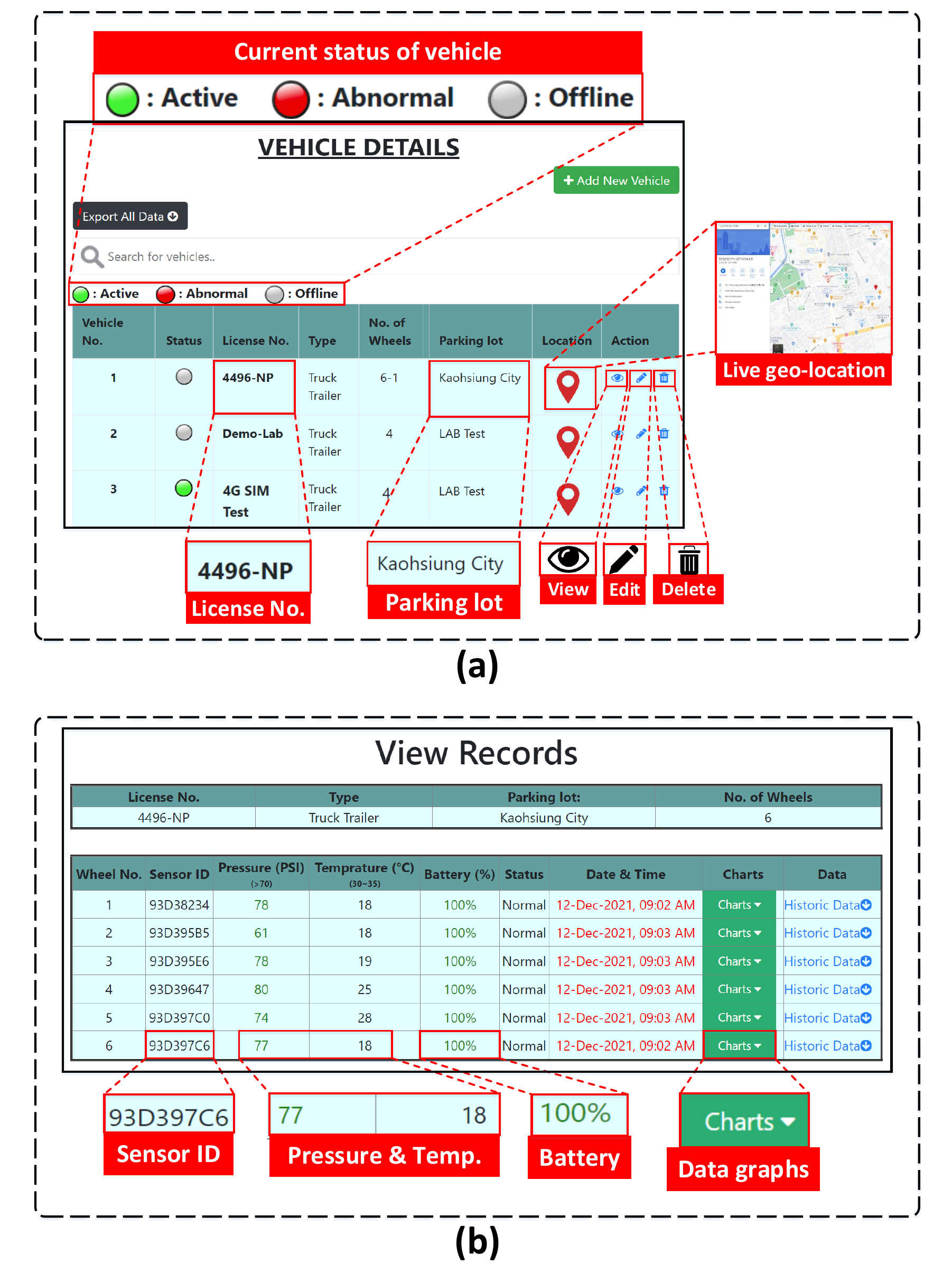}
		\caption{Snapshots of graphical user interface: (a) vehicle details including status and live location (b) vehicle's record including real-time pressure and temperature}
		\label{d_4}
	\end{figure}
	
	\begin{figure} [!h]
		\centering
		\includegraphics[width=\linewidth]{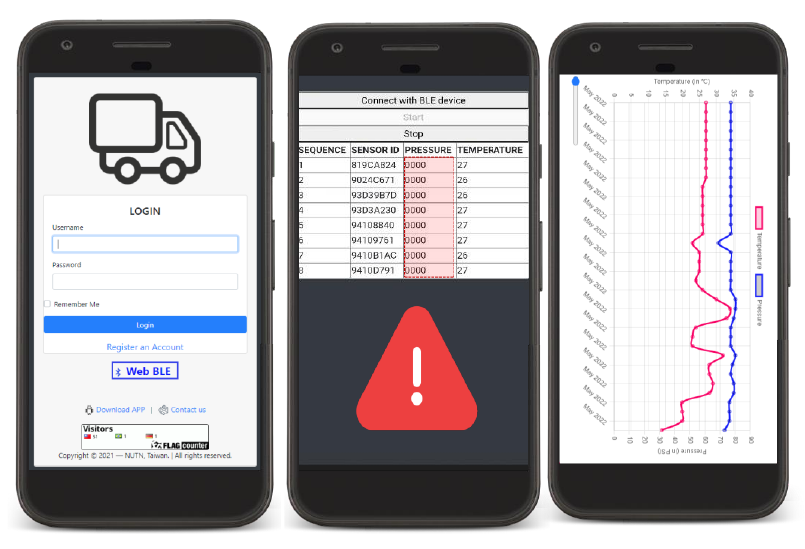}
		\caption{Snapshots of andriod based application}
		\label{d_5}
	\end{figure}
	
	\subsection{Cloud-Based Management System}Our graphical user interface (GUI) is developed in HTML, php, JavaScript and CSS. Fig.~\ref{d_4} shows different snapshots of GUI. Fig.~\ref{d_4}(a) shows all the listed vehicles with their status, vehicle type, number of wheels with attached sensors, parking lot, and live geo-location. The user interface allows the user to define a sensor by inserting its ID information and fetches real-time sensor information from the database while keeping it updated. Fig.~\ref{d_4}(b) displays the vehicle record for each individual vehicle, which includes sensor ID, pressure, temperature, battery status, accurate time of data update, and charts for historical data.
	
	Fig.~\ref{d_5} shows snapshots of an Android-based mobile app. The mobile app can fetch real-time data using Bluetooth technology without the internet or over the internet. The mobile app features an SOS emergency sound alarm for abnormal tire status detection to notify the user.
	
	\section{Conclusion and Future Work}
	In this paper, we have developed and validated an efficient IoT-enabled wireless tire sensing system (WTSS), addressing the issue of worst transmission delay for any sensor received while considering energy savings. Our analytic scheme effectively evaluates factors affecting transmission and identifies the worst transmission delay based on different wakeup-sleep periods and collision probabilities, leading to higher energy efficiency. The real-time WTSS system, with its wireless technologies, cloud-based management system, and mobile app, offers comprehensive data insights and instant alerts to avoid accidents and boost vehicle efficiency. The effectiveness of the proposed scheme is verified by analysis, experimentation, and simulation results. The scheme effectively addresses the worst transmission delay for sensor transmission, providing a reliable and efficient solution for real-time monitoring of tire status in intelligent vehicles. This will be valuable for developers working on real-time applications for logistic trucks where minimizing sensor transmission delays is crucial.
	
	For future work, we will further consider the energy issue of the transmitter in the WTSS. We plan to develop a new prototype of sensor transmitters with an energy-saving mechanism, enabling a more in-depth analysis of the relationship between energy consumption and delay. This initiative aims to extend the sensors’ lifetime while ensuring the immediacy of tire pressure monitoring.
	\bibliographystyle{IEEEtran}
	\bibliography{bibtii}

	\begin{IEEEbiography}[{\includegraphics[width=1in,height=1.25in,clip,keepaspectratio]{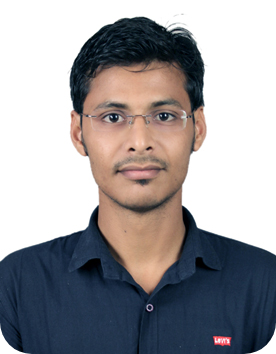}}]{Shashank Mishra} received the Masters degree in Software Technology from the VIT University, Vellore, India in 2013. He is currently working toward the Ph.D. degree in Electrical Engineering with the Department of Electrical Engineering, National University of Tainan, Tainan, Taiwan.
		
		His research interests include resource utilization and management in Artificial Intelligence of Things (AIoT), Wireless Sensor Networks (WSNs), and Intelligent Vehicles.
	\end{IEEEbiography}
	
	\begin{IEEEbiography}[{\includegraphics[width=1in,height=1.25in,clip,keepaspectratio]{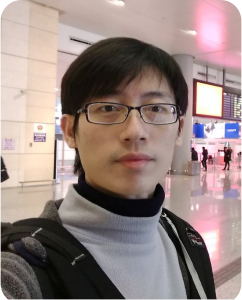}}]{Jia-Ming Liang} received his Ph.D. degree in Computer Science from National Chiao Tung University in 2011. He was an Assistant Research Fellow in the Department of Computer Science, National Chiao Tung University, Taiwan, from 2012 to 2015. In 2015, he was an Assistant Professor in the Department of Computer Science and Information Engineering, Chang Gung University, Taoyuan, Taiwan. Since 2020, he joined the Department of Electrical Engineering, National University of Tainan, Taiwan, where he is currently an Associate Professor.
		
		His research interests include resource management in 5G/B5G mobile networks and  Artificial Intelligence of Things (AIoT).
	\end{IEEEbiography}
\end{document}